\def\llncs{0}       % 1 = using llncs
\def\fullversion{1} % 0 = short version; 1 = full version
\def\authnotes{0}   % 0 = authnotes off; 1 = authnotes on
\newtheorem{thm}{Theorem}[section]
\newtheorem{lem}[thm]{Lemma}
\newtheorem{cor}[thm]{Corollary}
\newtheorem{propo}[thm]{Proposition}
\newtheorem{clm}[thm]{Claim}
\newtheorem{defn}[thm]{Definition}
\newtheorem{assm}[thm]{Assumption}
\newtheorem{rem}[thm]{Remark}
\newtheorem{obs}[thm]{Observation}
\newtheorem{egs}[thm]{Example}
\newtheorem{expr}{Experiment}
\newtheorem{fct}[thm]{Fact}
\newtheorem{cons}[thm]{Construction}
\newtheorem{nte}[thm]{Note}
\newenvironment{theorem}{\begin{thm}}{\end{thm}}
\newenvironment{lemma}{\begin{lem}}{\end{lem}}
\newenvironment{corollary}{\begin{cor}}{\end{cor}}
\newenvironment{proposition}{\begin{propo}}{\end{propo}}
\newenvironment{definition}{\begin{defn}}{\end{defn}}
\newenvironment{assumption}{\begin{assm}\begin{em}}{\end{em}\end{assm}}
\newenvironment{claim}{\begin{clm}\begin{rm}}{\end{rm}\end{clm}}
\newenvironment{remark}{\begin{rem}\begin{em}}{\end{em}\end{rem}}
\newenvironment{fact}{\begin{fct}\begin{em}}{\end{em}\end{fct}}
\newenvironment{construction}{\begin{cons}\begin{rm}}{\end{rm}\end{cons}}
\newtheorem{thm}{Theorem}
\newtheorem{lem}[thm]{Lemma}
\newtheorem{cor}[thm]{Corollary}
\newtheorem{propo}[thm]{Proposition}
\newtheorem{clm}[thm]{Claim}
\newtheorem{defn}[thm]{Definition}
\newtheorem{assm}[thm]{Assumption}
\newtheorem{rem}[thm]{Remark}
\newtheorem{fct}[thm]{Fact}
\newtheorem{cons}[thm]{Construction}
\renewenvironment{theorem}{\begin{thm}}{\end{thm}}
\renewenvironment{lemma}{\begin{lem}}{\end{lem}}
\renewenvironment{definition}{\begin{defn}\begin{rm}}{\end{rm}\end{defn}}
\newcommand{\lemref}[1]{Lemma~\ref{#1}}
\renewcommand{\eqref}[1]{\mbox{Equation~(\ref{#1})}}
\newlength{\saveparindent}
\newlength{\saveparskip}
\def\qed{{\hspace{1pt}\rule[-1pt]{3pt}{9pt}}
\end{rm}\addtolength{\parskip}{-0pt}
\setlength{\parindent}{\saveparindent}
\global\advance\proofqeded by 1 }
\def\qedenv{
\end{rm}\addtolength{\parskip}{-0pt}
\setlength{\parindent}{\saveparindent}
\global\advance\proofqeded by 1 }
\newenvironment{proof}%
 {\proofstart}%
 {\ifnum\proofqeded=\proofended~\qed\fi \global\advance\proofended by 1
  \medskip}
 {\proofenvstart}%
 {\ifnum\proofqeded=\proofended\qedenv\fi \global\advance\proofended by 1
  \medskip}
\def\proofstart{\@ifnextchar[{\@oprf}{\@nprf}}
\def\proofenvstart{\@ifnextchar[{\@osprf}{\@nsprf}}
\def\@oprf[#1]{\begin{rm}\protect\vspace{6pt}\noindent{\bf Proof of #1:\ }%
\addtolength{\parskip}{5pt}\setlength{\parindent}{0pt}}
\def\@osprf[#1]{\begin{rm}\protect\vspace{6pt}\noindent
\addtolength{\parskip}{5pt}\setlength{\parindent}{0pt}}
\def\@nprf{\begin{rm}\protect\vspace{6pt}\noindent{\bf Proof:\ }%
\addtolength{\parskip}{5pt}\setlength{\parindent}{0pt}}
\def\@nsprf{\begin{rm}\protect\vspace{6pt}\noindent%
\addtolength{\parskip}{5pt}\setlength{\parindent}{0pt}}
\newcounter{ctr}
\newcounter{ectr}
\newlength{\savejot}
\newenvironment{newmath}{\begin{displaymath}%
\setlength{\abovedisplayskip}{6pt}%
\setlength{\belowdisplayskip}{6pt}%
\setlength{\abovedisplayshortskip}{8pt}%
\setlength{\belowdisplayshortskip}{8pt} }{\end{displaymath}}
\newenvironment{neweqnarrays}{\begin{eqnarray*}%
\setlength{\abovedisplayskip}{-4pt}%
\setlength{\belowdisplayskip}{-4pt}%
\setlength{\abovedisplayshortskip}{-4pt}%
\setlength{\belowdisplayshortskip}{-4pt}%
\setlength{\jot}{-0.4in} }{\end{eqnarray*}}
\newenvironment{newequation}{\begin{equation}%
\setlength{\abovedisplayskip}{6pt}%
\setlength{\belowdisplayskip}{6pt}%
\setlength{\abovedisplayshortskip}{8pt}%
\setlength{\belowdisplayshortskip}{8pt} }{\end{equation}}
\newcommand{\authnote}[2]{
\ifnum\authnotes=1 
  \begin{center}
    \fbox{\begin{minipage}{5.7in}
      \textbf{#1 says:} #2
    \end{minipage}}
  \end{center} 
\fi
}
\newcommand{\calA}{{\cal A}}
\newcommand{\calB}{{\cal B}}
\newcommand{\calC}{{\cal C}}
\newcommand{\calG}{{\cal G}}
\newcommand{\calV}{{\cal V}}
\newcommand{\calX}{{\cal X}}
\newcommand{\calY}{{\cal Y}}
\newcommand{\calZ}{{\cal Z}}
\def\bits{\{0,1\}}
\newcommand{\Prob}[1]{{\Pr\left[\,{#1}\,\right]}}
\newcommand{\ra}{\rightarrow}
\newcommand{\eps}{\varepsilon}
\newcommand{\Ext}{\mathsf{Ext}}
\newcommand{\SD}[1]{\mathsf{SD}\left({#1}\right)}
\begin{document}

% Title, authors, abstract and keywords are in the abstract file
\title{On the Commitment Capacity of\\ Unfair Noisy Channels}

\author{Claude Cr\'epeau, Rafael~Dowsley and Anderson~C.~A.~Nascimento 
\thanks{Claude Cr\'epeau is with the School of Computer Science, McGill University, Local 318, Pavillon McConnell, 3480 rue University Montr\'eal (Qu\'ebec), Canada H3A 2A7. Email: crepeau@cs.mcgill.ca}
\thanks{Rafael Dowsley is with the Department of Computer Science, Bar-Ilan University, Israel. Email: rafael@dowsley.net. He is supported by the BIU Center for Research in Applied Cryptography and Cyber Security in conjunction with the Israel National Cyber Bureau in the Prime Minister’s Office.}
\thanks{Anderson~C.~A.~Nascimento is with the Institute of Technology, University of Washington Tacoma, 1900 Commerce Street, Tacoma, WA 98402-3100, USA. E-mail: andclay@uw.edu.}
}

\maketitle

\begin{abstract}
Noisy channels are a valuable resource from a cryptographic point of view. 
They can be used for exchanging secret-keys as well as realizing other cryptographic 
primitives such as commitment and oblivious transfer. To be really useful, noisy channels have to be consider in the scenario where a cheating party has some degree of control over the channel characteristics. Damg\r{a}rd et 
al. (EUROCRYPT 1999) proposed a more realistic model where such level of control is permited to an adversary, the so called unfair noisy channels, and proved that they
can be used to obtain commitment and oblivious transfer protocols. Given that 
noisy channels are a precious resource for cryptographic purposes, one important question is 
determining the optimal rate in which they can be used. The commitment capacity has already been 
determined for the cases of discrete memoryless channels and Gaussian channels. In this work we 
address the problem of determining the commitment capacity of unfair noisy channels. We compute a single-letter characterization of the commitment capacity of unfair noisy channels. In the case where an adversary has no control over the channel (the fair case) our capacity reduces to the well-known capacity of a discrete memoryless binary symmetric channel. 
\end{abstract}

\begin{IEEEkeywords}
Commitment capacity, unconditionally secure cryptography, unfair noisy channels.
\end{IEEEkeywords}

\IEEEpeerreviewmaketitle

\section{Introduction}

\textbf{Commitment protocols.} Consider the case of first-price sealed-bid auctions: 
the participants place their bids in sealed 
envelopes that are later on opened to determine the winner 
and how much he will pay. The sealed envelopes play a crucial 
role in this protocol since they protect the secrecy of
each bid during the bidding process 
but at the same time they preclude the winner from changing 
his bid after the first phase. In the digital world, commitment 
protocols have a role similar to that of the sealed envelopes.
They are cryptographic protocols involving two mutually distrustful 
parties, Alice and Bob. The idea of the protocol is very simple: 
in a first phase Alice commits to a value $c$ by sending some information 
to Bob which commits her to $c$ without revealing it. Later on, Alice can execute 
with Bob a second phase of the protocol in order to reveal the value 
$c$. From Alice's point of view the protocol should guarantee
that after the first phase no information about $c$ is leaked to Bob.
From Bob's point of view the protocol should guarantee that
Alice cannot change her mind after the first phase, i.e., there
is at most one value that Alice can successfully open
in an eventual execution of the second phase.

Commitment was introduced by Blum~\cite{Blum1983} and 
is one of the most fundamental cryptographic protocols, widely 
used as sub-protocol in applications such as secure 
multi-party computation~\cite{STOC:GolMicWig87,C:ChaDamVan87,STOC:ChaCreDam88}, contract 
signing~\cite{EveGolLem85} and zero-knowledge proofs~\cite{GolMicWig91,Goldreich01,BraChaCre88}. 
In this work we are interested in commitment protocols that are 
information-theoretically secure, i.e., both security guarantees 
should hold even against (computationally) unbounded adversaries.

\textbf{Noise-based cryptography.}
The potential of noisy channels for cryptographic purposes
was first noticed by Wyner~\cite{Wyner75} who proposed a scheme for 
exchanging a secret-key in the presence of an eavesdropper (henceforth 
denote Eve). Wyner considered a model in which 
Eve receives the transmitted symbols over 
a degraded channel with respect to the legitimate
receiver�'s channel. This possibility result was later extended to 
the class of general (non-degraded) broadcast channels
by Csisz\'{a}r and K\"{o}rner~\cite{IEEEIT:CsiKor78}.
Both models did not consider public communication.
Maurer~\cite{IEEEIT:Maurer93} proved that public communication can 
improve the parties' ability of generating a secret.
Ahlswede and Csisz\'{a}r~\cite{IEEEIT-AhlCsi93} also improved the previous
results. 

In the case of commitment protocols, the first solution based on 
noisy channels was developed by Cr\'{e}peau and 
Kilian~\cite{FOCS:CreKil88}. The efficiency of the commitment protocols
based on noisy channels was later improved by
Cr\'{e}peau~\cite{EC:Crepeau97}. 

In all these pioneering works, the case where an adversary can control the characteristics of the channel was not considered.

\textbf{Unfair noisy channels.}
Damg\r{a}rd et 
al.~\cite{EC:DamKilSal99} proposed a more realistic model, called 
unfair noisy channels (UNC), in which the error
probability of the channel is not exactly known by the protocol
participants and can be influenced by malicious parties. The honest parties only know that the crossover probability 
is between $\gamma$ and $\delta$ (for $0<\gamma<\delta<1/2$), and an adversary can set the crossover probability to any value in this range.
Damg\r{a}rd et al.~\cite{EC:DamKilSal99} proved that using UNC with certain parameters it is possible to implement an information-theoretically 
secure commitment protocol. 

Recently a variant of UNC known as elastic channel has been studied. On one hand, it has two restrictions with relation to UNC: (1) only a corrupt receiver can manipulate the
crossover probability to any value in the range $[\gamma;\delta]$; (2) when both parties are honest the crossover probability is always $\delta$. On the other hand, commitment (and even oblivious transfer) can be 
obtained for all parameters  $0<\gamma<\delta<1/2$ \cite{EC:KhuMajSah16,cryptoeprint:2016:120}. 

\textbf{Commitment capacity.}
Since noisy channels are valuable resources for cryptography, 
an important question is determining the optimal rate
in which they can be used to implement the 
diverse cryptographic primitives. In the case of commitment this amounts to 
determining the optimal ratio between the length of the committed
values and the number of uses of the noisy channel 
(i.e., the commitment capacity of the channel). This problem
was first raised by Winter et al.~\cite{WinNasIma03}, who also answered
the question for the case of discrete memoryless channels.
After that, the commitment capacity of Gaussian channels
was obtained by Nascimento et al.~\cite{IEEEIT:NBSI08}.
The question of determining the optimal way of using 
noisy channels was also studied for other cryptographic tasks, 
for instance in the vast literature on secrecy 
capacity~\cite{Wyner75,IEEEIT:CsiKor78,IEEEIT:LeuHel78,EC:OzaWyn84,IEEEIT:Maurer93,IEEEIT-AhlCsi93,IEEEIT:CsiNar04,ISIT:ParBla05,ISIT:BarRod06,LiYatTra10,ISIT:LiaPooSha07,IEEEIT:GopLaiElg08,LaiElgPoo07,IEEEIT:CsiNar08,IEEEIT:AFJK09,BagMotKha09,IEEEIT:EkrUlu11,IEEEIT:OggHas11}
and also in the works that deal with the oblivious transfer 
capacity~\cite{IEEEIT:NasWin08,ISIT:ImaMorNas06,ISIT:AhlCsi07,IEEEIT:PDMN11,DowNas14}.

Our contribution in this work is determining the commitment 
capacity of unfair noisy channels.
The work is organized as follows. In section~\ref{sec:pre}
we establish our notation and present some existing results
that will be used thereafter. In section~\ref{sec:unc}, 
we introduce the model and the security definitions, and also 
state our main theorem. In section~\ref{sec:dir} we prove the
direct part of the theorem and in section~\ref{sec:con} 
the converse. We finish with some concluding remarks in
section~\ref{sec:conc}.

\section{Preliminaries}\label{sec:pre}

\subsection{Notation}

We will denote by calligraphic letters the domains of random 
variables, by upper case letters the random variables and by 
lower case letters the realizations of the random variables. 
Other sets are also denoted by calligraphic letters and the 
cardinality of a set $\calX$ is denoted by $|\calX|$. For a random 
variable $X$ over $\calX$, we denote its probability distribution 
by $P_X: \calX \ra [0,1]$ with $\sum_{x \in \calX} P_X(x) =1$. 
For a joint probability distribution $P_{XY}: \calX \times \calY \ra [0,1]$, 
let $P_X(x) := \sum_{y \in \calY}P_{XY}(x,y)$ denote the marginal 
probability distribution and let $P_{X|Y}(x|y):=\frac{P_{XY}(x,y)}{P_Y(y)}$ 
denote the conditional probability distribution if $P_Y(y) \neq 0$. 
We write $U_n$ for a random variable uniformly distributed over $\bits^n$. 

If an algorithm (or a function) $f$ is randomized, we denote by 
$f(x;r)$ the result of computing $f$ on input $x$ with randomness 
$r$. If $a$ and $b$ are two strings of bits of the same length, 
we denote by $a \oplus b$ their bitwise XOR and by 
$\mathsf{HD}(a,b)$ their Hamming distance. The logarithm used in this paper are 
in base 2. 

\subsection{Entropies}

The binary entropy function of $x$ is denoted by 
$h(x):= -x \log x - (1-x)\log(1-x)$. For finite alphabets $\calX,\calY,\calZ$ and random variables
$X \in \calX,Y \in \calY, Z\in \calZ$, the entropy of $X$ is denoted by $H(X)$ and 
the mutual information between $X$ and $Y$ by $I(X;Y)$. The min-entropy is given by
\[
H_\infty(X)=\min\limits_{x}\log(1/P_{X}(x)).
\]
Its conditional version is defined as
\[
H_\infty(X|Y)=\min\limits_{y}H_\infty(X|Y=y). 
\]
The max-entropy is defined as 
\[
H_0(X)=\log|\{x\in X|P_{X}(x)>0\}|
\]
and its conditional version is given by
\[
H_0(X|Y)=\max\limits_{y}H_0(X|Y=y).
\]
The statistical distance between two probability distributions 
$P_X$ and $P_Y$ over the same domain $\calV$ is
\[
\SD{P_X,P_Y} := \frac{1}{2} \sum_{v \in \calV} |P_X(v) - P_Y(v)|.
\]
For $0 \leq \eps < 1$ the $\eps$-smooth versions 
of the above entropies are defined as
\[
H_\infty^\eps(X)=\max\limits_{X':\SD{P_{X'},P_{X}}\leq\eps}H_\infty(X'),
\]
\[
H_\infty^\eps(X|Y)=\max\limits_{X'Y':\SD{P_{X'Y'},P_{XY}}\leq\eps}H_\infty(X'|Y'),
\]
\[
H_0^\eps(X)=\min\limits_{X':\SD{P_{X'},P_{X}}\leq\eps}H_0(X'),
\]
\[
H_0^\eps(X|Y)=\min\limits_{X'Y':\SD{P_{X'Y'},P_{XY}}\leq\eps}H_0(X'|Y').
\]

We will need the chain rules for smooth entropies conditioned
on an additional random variable $Z$~\cite{AC:RenWol05}:
\begin{eqnarray*}
& &H_\infty^{\eps + \eps'}(XY|Z) - H_\infty^{\eps'}(Y|Z) \geq  H_\infty^{\eps}(X|YZ)\\
 & & \quad \geq  H_\infty^{\eps_1}(XY|Z) -H_0^{\eps_2}(Y|Z) - \log(1/(\eps - \eps_1 - \eps_2))
\end{eqnarray*}
and
\begin{eqnarray*}
& & H_0^{\eps + \eps'}(XY|Z) - H_0^{\eps'}(Y|Z)  \leq  H_0^{\eps}(X|YZ) \\
 & & \quad \leq H_0^{\eps_1}(XY|Z) - H_\infty^{\eps_2}(Y|Z) + \log(1/(\eps - \eps_1 - \eps_2)).
\end{eqnarray*}

\subsection{Strong Extractors and the Leftover-Hash Lemma}

In the direct part of our proof we will use strong randomness extractors,
therefore we present here the relevant definitions and properties.

\begin{definition}[Strong Randomness Extractors~\cite{NisZuc96,DORS08}]
Let $\Ext:  \bits^n \ra \bits^\ell$ be a probabilistic polynomial time function 
which uses $r$ bits of randomness. We say that $\Ext$ is an efficient 
$(n,m,\ell,\eps)\mathrm{-strong}$  $\mathrm{extractor}$ if for all probability 
distributions $P_W$ with $\mathcal{W} = \bits^n$ and such that 
$H_\infty(W) \geq m$, for random variables $R$ uniformly distributed in the bit-strings of length $r$ and 
$L$ uniformly distributed in the bit-strings of length $\ell$,
we have that $\SD{P_{\Ext(W;R), R},P_{L,R}} \leq \eps$. 
\end{definition}

Strong extractors can extract at most $\ell=m-2\log(\eps^{-1})+O(1)$ 
bits of nearly random bits~\cite{RadTas00} and this optimal bound is 
achieved by universal hash functions~\cite{CarWeg77}.

\begin{definition}[$t$-Universal Hash Functions]
A class $\calG$ of functions $\calA \ra \calB$ is 
$t$-\emph{universal} if, for any distinct $x_1, \ldots, x_t \in \calA$, 
when $g$ is chosen uniformly at random from $\calG$, the induced distribution on 
$(g(x_1), \ldots, g(x_t))$ is uniform over $\calB^t$.
\end{definition}

The leftover-hash lemma (similarly the privacy-amplification 
lemma)~\cite{STOC:ImpLevLub89,HILL99,BenBraRob88,IEEEIT:BBCM95,DORS08} guarantees 
that universal hash functions can be used to extract
$\ell=m-2\log(\eps^{-1})+2$ nearly random bits.

\begin{lemma}[Leftover-hash lemma]
Assume that a class $\calG$ of functions $G: \bits^n \ra \bits^\ell$ 
is 2-universal. Then for $G$ selected uniformly at random from 
$\calG$ we have that
\[
\SD{P_{G(W),G},P_{U_\ell,G}} \leq \frac{1}{2} \sqrt{2^{-H_\infty(W)}2^\ell}.
\]
In particular, universal hash functions are 
$(n,m,\ell,\eps)\mathrm{-strong}$  $\mathrm{extractors}$ 
whenever $\ell \leq m-2\log(\eps^{-1})+2$.
\end{lemma}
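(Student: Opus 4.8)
The plan is to pass through the $\ell_2$ (Euclidean) norm and the collision probability, which is the standard route for this bound. Write $N = 2^\ell |\calG|$ for the size of the common domain of the two distributions, let $Q = P_{G(W),G}$ be the real distribution and $U = P_{U_\ell, G}$ the ideal (uniform) one. First I would invoke the Cauchy--Schwarz inequality to bound the statistical ($\ell_1$) distance by the $\ell_2$ distance, giving $\SD{Q,U} \leq \frac{1}{2}\sqrt{N}\,\|Q-U\|_2$. Next I would use the elementary identity $\|Q-U\|_2^2 = \sum_v Q(v)^2 - 1/N$, valid whenever $U$ is uniform on a set of size $N$; this reduces the whole problem to upper bounding the collision probability $\sum_v Q(v)^2$ of the real distribution.

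The heart of the argument is computing this collision probability. Let $(W,G)$ and $(W',G')$ be two independent copies of the relevant pair; then $\sum_v Q(v)^2 = \Prob{G = G' \wedge G(W) = G'(W')}$. Since $G,G'$ are independent and uniform on $\calG$, this factors as $\frac{1}{|\calG|}\,\expect{\Prob{G(W)=G(W')}}$, where the inner probability is over $W,W'$ and the outer expectation over $G$. I would then split according to whether $W = W'$: the diagonal contributes the collision probability $\sum_w P_W(w)^2$ of $W$ (the hash values always agree there), while off the diagonal $2$-universality makes $(G(w),G(w'))$ uniform on $\bits^\ell \times \bits^\ell$ for distinct $w,w'$, so $\Prob{G(w)=G(w')} = 2^{-\ell}$. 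This yields $\expect{\Prob{G(W)=G(W')}} \leq \sum_w P_W(w)^2 + 2^{-\ell}$. Bounding the collision probability of $W$ by $\sum_w P_W(w)^2 \leq \max_w P_W(w) = 2^{-H_\infty(W)}$ gives $\sum_v Q(v)^2 \leq \frac{1}{|\calG|}\big(2^{-H_\infty(W)} + 2^{-\ell}\big)$.

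Substituting back, the term $2^{-\ell}/|\calG| = 1/N$ cancels exactly against the $-1/N$ in the $\ell_2$ identity, leaving $\|Q-U\|_2^2 \leq 2^{-H_\infty(W)}/|\calG|$. Plugging this into the Cauchy--Schwarz bound with $N = 2^\ell|\calG|$, the factor $|\calG|$ cancels and $\sqrt{N}$ contributes $\sqrt{2^\ell}$, producing exactly $\SD{Q,U} \leq \frac{1}{2}\sqrt{2^{-H_\infty(W)}2^\ell}$. For the ``in particular'' clause I would use $H_\infty(W) \geq m$ to get $\SD{Q,U} \leq \frac{1}{2}2^{(\ell-m)/2}$ and solve $\frac{1}{2}2^{(\ell-m)/2} \leq \eps$, which rearranges to $\ell \leq m - 2\log(\eps^{-1}) + 2$. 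I do not expect a genuine obstacle here; the only places demanding care are the collision-probability bookkeeping --- keeping the factors of $|\calG|$ straight and confirming the clean cancellation of the $1/N$ term --- together with the standard estimate $\sum_w P_W(w)^2 \leq 2^{-H_\infty(W)}$.
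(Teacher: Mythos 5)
Your proposal is correct and complete: the Cauchy--Schwarz reduction to the $\ell_2$ distance, the identity $\|Q-U\|_2^2=\sum_v Q(v)^2-1/N$, the collision-probability computation with the diagonal/off-diagonal split using $2$-universality, and the exact cancellation of the $1/N$ term all check out, as does the arithmetic for the ``in particular'' clause. The paper itself states this lemma without proof, citing the privacy-amplification literature; the argument you give is the standard one found in those references, so there is nothing to reconcile.
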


The following lemma by Rompel will be also useful.

\begin{lemma}[\cite{Rompel90}]\label{lem:rom}
Suppose $t$ is a positive even integer, $X_1, \cdots, X_u$ are $t$-wise independent random variables taking values in the range $[0,1]$, $X=\sum_{i=1}^u X_i$, $\mu = E[X]$, and $A>0$. Then
$$\Prob{|X-\mu|>A}<O\left(\left(\frac{t\mu + t^2}{A^2}\right)^{t/2}\right).$$
\end{lemma}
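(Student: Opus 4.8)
The plan is to prove the bound by the moment method. Since $t$ is even, $z\mapsto z^t$ is nonnegative, so Markov's inequality applied to the nonnegative variable $(X-\mu)^t$ gives
\[
\Prob{|X-\mu|>A}\;\le\;\frac{\expect{(X-\mu)^t}}{A^t}.
\]
Everything then reduces to bounding the $t$-th central moment $\expect{(X-\mu)^t}$; the target $O\big((t\mu+t^2)^{t/2}\big)$ appears after dividing by $A^t$, with the two summands $t\mu$ and $t^2$ tracking two different regimes of $\mu$.

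First I would center the variables, setting $Y_i:=X_i-\expect{X_i}$, so that $X-\mu=\sum_{i=1}^u Y_i$, each $Y_i$ has mean zero and $|Y_i|\le 1$, and the family $\{Y_i\}$ remains $t$-wise independent. Expanding the power,
\[
\expect{(X-\mu)^t}=\sum_{(i_1,\dots,i_t)\in[u]^t}\expect{Y_{i_1}\cdots Y_{i_t}}.
\]
Each term depends only on the set partition of the $t$ positions induced by equal indices, and by $t$-wise independence the expectation factorizes over the distinct index-classes. Whenever some index occurs exactly once, its factor is $\expect{Y_i}=0$ and the whole term vanishes; hence only tuples in which every distinct index appears at least twice survive, so at most $t/2$ distinct indices occur.

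Next I would bound the surviving terms. Since $|Y_i|\le 1$, for any exponent $a\ge 2$ we have $\expect{|Y_i|^{a}}\le\expect{Y_i^{2}}=:\sigma_i^2$, and because $X_i\in[0,1]$ gives $X_i^2\le X_i$, we get $\sigma_i^2\le\expect{X_i^2}\le\expect{X_i}$, whence $\sum_i\sigma_i^2\le\mu$. Grouping tuples by the partition they induce, taking absolute values, and summing the index assignments while relaxing the distinctness constraint turns the estimate into a product of factors $\sum_i\sigma_i^2\le\mu$, one per block. Writing $N(t,j)$ for the number of partitions of $[t]$ into $j$ blocks each of size at least $2$, this yields
\[
\expect{(X-\mu)^t}\;\le\;\sum_{j=1}^{t/2}N(t,j)\,\mu^{\,j}.
\]

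The crux, and the step I expect to be the main obstacle, is the combinatorial estimate of $N(t,j)$ together with the verification that $\sum_j N(t,j)\mu^j=O\big((t\mu+t^2)^{t/2}\big)$. Two regimes contribute: the top term $j=t/2$ counts perfect matchings of $[t]$, of which there are $(t-1)!!\approx\sqrt2\,(t/e)^{t/2}$, contributing order $(t\mu)^{t/2}$; the terms with small $j$ (few, large blocks) dominate when $\mu$ is small and produce the $t^2$ part of the bound. I would bound $N(t,j)\le\frac{t!}{(t-2j)!}\cdot\frac{j^{\,t-2j}}{2^{j}\,j!}$ by first choosing two ``anchor'' elements per block and then distributing the remaining $t-2j$ elements among the $j$ blocks (discounting orderings), noting that this is exact at $j=t/2$. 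One then checks term-by-term that $N(t,j)\mu^j\le C\,(t\mu+t^2)^{t/2}$ with enough decay in $j$ for the whole sum to be absorbed into the $O(\cdot)$. The delicate point is keeping $t$ and $\mu$ correctly coupled throughout, since neither dominates across the full parameter range, which is exactly why the final bound carries the sum $t\mu+t^2$ rather than a single monomial.
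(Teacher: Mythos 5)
The paper offers no proof of this lemma at all: it is imported verbatim from Rompel's work and used as a black box (the standard reference with an explicit constant is Bellare--Rompel, who prove the bound with the constant $8$ in place of the $O(\cdot)$). So there is nothing in the paper to compare your argument against; what can be said is that your moment-method outline is precisely the proof in the cited source, and its skeleton is sound. Markov applied to the nonnegative variable $(X-\mu)^t$, the centering, the vanishing of every monomial containing a singleton index (this is exactly where $t$-wise independence enters, since each monomial involves at most $t$ distinct variables), the estimate $\expect{|Y_i|^a}\le\expect{Y_i^2}\le\expect{X_i}$ for $a\ge 2$, and the reduction to $\sum_{j\le t/2}N(t,j)\,\mu^j$ with your over-counting bound on $N(t,j)$ (which is exact at $j=t/2$, where it gives $(t-1)!!$) are all correct.

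One warning about the step you defer. The implied constant in the $O(\cdot)$ must be absolute, not a function of $t$: the paper applies the lemma with $t=4n\to\infty$ and concludes a bound of $O(2^{-t/2})$, which would be vacuous if the constant grew with $t$. Your term-by-term plan does close, but the most tempting comparison --- matching $N(t,j)\mu^j$ against the single binomial term $\binom{t/2}{j}(t\mu)^j(t^2)^{t/2-j}$ in the expansion of $(t\mu+t^2)^{t/2}$ --- only yields a ratio summing to $e^{\Theta(t)}$, which is not good enough. A route that works: write $j=\lambda t$, use weighted AM--GM in the form $(t\mu+t^2)^{t/2}\ge t^{(1-\lambda)t}\mu^{\lambda t}$, and apply Stirling to your bound on $N(t,j)$ to get $N(t,j)\mu^j\le \mathrm{poly}(t)\,f(\lambda)^{t}\,t^{(1-\lambda)t}\mu^{\lambda t}$ with $f(\lambda)=\lambda^{1-3\lambda}(1-2\lambda)^{-(1-2\lambda)}(2e)^{-\lambda}$; one checks $f(\lambda)\le 0.85$ on $(0,1/2]$, so the $t/2$ terms sum to $O\bigl((t\mu+t^2)^{t/2}\bigr)$ with an absolute constant. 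With that computation supplied, your proof is complete.
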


\section{Problem Statement}\label{sec:unc}

\subsection{Unfair Noisy Channels}

Since commitment protocols that are information-theoretically
secure cannot be implemented from scratch, the efforts
were focused on obtaining such protocols based on physical
assumptions. One of these assumptions is the existence
of noisy channels that can be used by the parties. Binary 
symmetric channels are known to allow the implementation of 
commitment schemes~\cite{FOCS:CreKil88,EC:Crepeau97}. 
But these protocols have the disadvantage that they rely on the assumption that a malicious party do not interfere with the channel to try to modify 
its error probability.

%But as argued 
%by Damg\r{a}rd et al.~\cite{EC:DamKilSal99} the protocols 
%based on noisy channels have the disadvantage that the participants rely on
%the exact knowledge of the channel's error probability, which in general is
%not realistic to assume since a malicious party can try to interfere with
%the channel and modify the error probability (and even if both
%parties are honest, it is difficult to estimate the error probability
%with arbitrary precision).

Damg\r{a}rd et al.~\cite{EC:DamKilSal99} 
introduced a more realistic assumption, namely unfair noisy 
channels, that is a modification of the binary symmetric channel. 
In this model, a channel is specified by two parameters: $\gamma$ 
and $\delta$ (with $0 < \gamma < \delta < \frac{1}{2}$); 
and the channel is denoted as $(\gamma, \delta)$-UNC.
The error probability of  the channel is guaranteed to fall into the 
interval $[\gamma,\delta]$, but is not known by the honest parties. 
Therefore any protocol based on a $(\gamma, \delta)$-UNC
should work for any error probability in the range $[\gamma,\delta]$.
On the other hand, a malicious party can set the error probability
to any value in the range $[\gamma,\delta]$. 

\begin{definition}[Unfair Noisy Channels] The $(\gamma, \delta)$-UNC
receives as input a bit $X$ and outputs a bit $Y$. The transition probability of the 
$(\gamma, \delta)$-UNC is determined by an auxiliary parameter $T$
whose alphabet are the real numbers in the interval $[\gamma,\delta]$. 
If the transmitter or the receiver is malicious, he can choose the value of 
$T$; otherwise it is randomly chosen and is not revealed to the parties.
The transition probability is given by $P_{Y|XT}(y,x,t)=1-t$ if $y=x$ and $P_{Y|XS}(y,x,t)=t$ if $y \neq x$.
\end{definition}

Damg\r{a}rd et al.~\cite{EC:DamKilSal99} proved that, on one hand, if 
$\delta \geq 2\gamma(1-\gamma)$ then the $(\gamma, \delta)$-UNC 
is trivial (i.e., secure commitment protocols cannot be 
based on this channel). On the other hand, if 
$\delta < 2\gamma(1-\gamma)$ then there is a commitment protocol 
based on the $(\gamma, \delta)$-UNC.

\emph{Remark:} Note that a $(\gamma, \delta)$-UNC can equivalently
be seen as the concatenation of two binary symmetric channels,
$W_f$ with error probability $\gamma$ and $W_v$ with
error probability  $\theta$ for  
$0 \leq \theta \leq \frac{\delta - \gamma}{1 - 2\gamma}$.
The error probability of the channel $W_v$ can be controlled 
by a malicious party and it is unknown in the case that both parties 
are honest.

\begin{definition} We say that two strings $x^n$ and $y^n$ are $\epsilon$-\emph{compatible} with an $(\gamma, \delta)$-Unfair Noisy Channel if, for $\epsilon >0$, the Hamming distance ($\mathsf{HD}$) of $x^n$ and $y^n$ is at most $\delta n +\epsilon$. Similarly, two random variables $X^n$ and $Y^n$ are said to be compatible with an $(\gamma, \delta)$-Unfair Noisy Channel if $\Prob{\mathsf{HD}(X^n,Y^n)>\delta n}$ is negligible in $n$. 
\end{definition}

\subsection{Commitment Capacity}

Since $(\gamma, \delta)$-UNC are valuable resources, one would 
like to use them in the most efficient way. Hence the important 
question of  determining the commitment capacity of these channels arises.
Our goal in this work is to determine the commitment capacity 
of unfair noisy channels in the same way that Winter et al.~\cite{WinNasIma03} 
did for the discrete memoryless channels and Nascimento et al.~\cite{IEEEIT:NBSI08} 
did for the Gaussian channels. Let us begin by defining the concept
of commitment protocols and recalling its security notions.

A commitment protocol has two phases: called commitment and opening. 
There are two parties involved in the protocol: the committer (also denoted
Alice) and the verifier (also denoted Bob). The protocol
works as follows. In the commitment phase, Alice commits to 
a value $c$, but without revealing anything about it to Bob. Later on,
Alice can execute the opening phase to disclose the value $c$
to Bob. The security guarantee that Alice expects from the protocol
is that nothing about $c$ should be learned by Bob in the first phase,
while the security guarantee that Bob expects is that Alice should
not be able to change the value committed to after the first phase.
We proceed with a more detailed description of these definitions
and of the resources available to the parties in our model.

Alice and Bob have two channels available between them:
\begin{itemize}
\item a bidirectional authenticated noiseless channel, and
\item $(\gamma, \delta)$-UNC from Alice to Bob
\end{itemize}
Note that this model allows multiple uses of the 
$(\gamma, \delta)$-UNC within a protocol in an
interactive manner. Let $n$ denote the number of times 
that the parties use the $(\gamma, \delta)$-UNC channel. 

\paragraph{Commitment Protocol}
A commitment protocol is a family of protocols indexed by the 
security parameter $n$. Each protocol uses the $(\gamma, \delta)$-UNC 
$n$ times and proceeds in two phases as described below. 
For the sake of simplicity in the notation we will not explicitly 
mention the dependence on the security parameter.
Both parties have access to local randomness. Note that all the 
messages generated by Alice and Bob are well-defined random variables, 
depending on the value that Alice wants to commit to, $c$, and the local randomness of 
the parties. As usual, we 
assume that the noiseless messages exchanged by the players and their 
personal randomness are taken from $\bits^*$.

\paragraph{Commitment Phase} Alice has an input $c$
(from the message space  $\calC$) that 
she wants to commit to.  
There are $n$ rounds of communication through
the $(\gamma, \delta)$-UNC and in each of these rounds Alice inputs a 
symbol $x_i$ to the $(\gamma, \delta)$-UNC and an output $y_i$ is
delivered to Bob. Let $X^n$ be the random variable denoting the bitstring sent through
the $(\gamma, \delta)$-UNC and $Y^n$ the bitstring received through 
the $(\gamma, \delta)$-UNC. The parties can use the bidirectional authenticated 
noiseless channel at any time, with the messages possibly depending on $X^n$, 
$C$ and the local randomness. Let $M$ be the random variable denoting all the noiseless messages 
exchanged between the players, 

\paragraph{Opening Phase} The parties only exchange messages over 
the noiseless channel. Alice announces the value $\hat{c}$ and the bitstring $\hat{x}^n$ that she claims that she used during the 
first phase. Finally Bob executes a test 
$\beta(\hat{x}^n, y^{n}, m, \hat{c})$ in order to decide if he accepts the value $\hat{c}$ or not.
\\

We call Bob's view all the data in his possession after the completion of 
the commitment phase and denote it by $view_{B}$. 

A commitment protocol is $\epsilon$-\textit{sound} if for honest Alice and Bob 
executing the protocol and for all $c \in \calC$ and for any $X^n$ compatible with $Y^n$ for the $(\gamma, \delta)$-UNC,
\[
\Prob{\beta(X^n, Y^n, M, c) = accept} \geq 1 - \epsilon.
\]
A commitment protocol is $\epsilon$-\textit{concealing} if for any possible 
behavior of Bob in the commitment phase,
\[
I(C;View_B) \leq \epsilon.
\]
A protocol is $\epsilon$-\textit{binding}, if for all $\widetilde{c},\overline{c} \in \calC$ such that $\widetilde{c} \neq \overline{c}$
and for any strategy of (a dishonest) Alice to pick the random variable $X^n$ that is sent through 
the $(\gamma, \delta)$-UNC channel, and the random variables $\widetilde{X}^n$ and $\overline{X}^n$ 
that are presented during the opening phase
\begin{equation*}
\Prob{
\begin{array}{l}
\beta(\widetilde{X}^n,Y^n, M, \widetilde{c}) = accept~\&\\ 
\beta(\overline{X}^n,Y^n, M, \overline{c}) = accept
\end{array} 
} \leq \epsilon.
\end{equation*}

We call a commitment protocol \textit{secure} if there exists a function $\epsilon$
that is negligible in the security parameter $n$ and is such that the protocol
is  $\epsilon$-\textit{sound}, $\epsilon$-\textit{concealing} and 
$\epsilon$-\textit{binding}.\footnote{For easy of presentation the security of 
the constructions is argued in the stand-alone model (as usual in cryptography)
in which case there is only one execution of
the protocol. But the security of the commitment protocols based on noisy 
channels can be extended to the UC framework~\cite{FOCS:Canetti01} in which the protocols 
can be composed and arbitrary protocols can be executed in parallel~\cite{JIT:DGMN13,SBSEG:DMN08}.}

\emph{Remark:} We restrict our model to protocols where the public conversation $M$ does not depend on the channel output $Y^n$ given $X^n$, that is $I(M;Y^n|X^n)=0$. This is indeed the case for all the protocols in the literature. Moreover, the public communication is used solely to prevent Alice from cheating, thus we see no reason for a commitment protocol based on noisy channels to have its public communication depending on the channel output for a given input $X^n$. 

The \emph{commitment rate} of the protocol is given by
\[
R_C=\frac{\log|\calC|}{n}.
\]

A commitment rate is said to be achievable if there exists 
a secure commitment protocol that achieves this rate. The 
\emph{commitment capacity} of a $(\gamma, \delta)$-UNC 
is the supremum of the achievable rates.

Our main result is presented below and states the commitment 
capacity of the $(\gamma, \delta)$-UNC. The proof appears in 
sections~\ref{sec:dir} and~\ref{sec:con}.

\begin{theorem}
The commitment capacity of any non-trivial $(\gamma, \delta)$-Unfair 
Noisy Channel is given by
\[
h(\gamma)-h(\theta), \mathrm{~for~}\theta=\frac{\delta - \gamma}{1 - 2\gamma}.
\]

\end{theorem}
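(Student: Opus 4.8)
The statement is a capacity result, so the proof splits into a direct (achievability) part and a converse. The guiding idea is the decomposition noted in the Remark after the definition of the UNC: a $(\gamma,\delta)$-UNC is a fair channel $W_f = \mathrm{BSC}(\gamma)$ followed by an adversarially controlled $W_v = \mathrm{BSC}(\theta)$ with $0 \le \theta \le \frac{\delta-\gamma}{1-2\gamma}$. This splits the two security requirements across the two sub-channels: concealing must hold even against a receiver who removes all controllable noise (sets $\theta = 0$ and sees $X^n$ only through $\mathrm{BSC}(\gamma)$), so the hidden randomness available to Alice is $\approx n\,h(\gamma)$ bits; binding must hold even against a committer who injects the maximal controllable noise, which gives her $\approx n\,h(\theta)$ bits of slack to equivocate. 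The capacity is therefore the difference $h(\gamma) - h(\theta)$, which is positive exactly in the non-trivial regime $\delta < 2\gamma(1-\gamma) \Leftrightarrow \theta < \gamma$.

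\textbf{Direct part.} I would have Alice draw $X^n \getsr \bits^n$ uniformly, send it through the $(\gamma,\delta)$-UNC (Bob receives $Y^n$), and then use the noiseless channel to send a message $M$ with two components: a syndrome-type string of length $\approx n\,h(\theta)$ that pins down $X^n$ enough to bind her, together with a $2$-universal hash value $G(X^n)\oplus c$ that masks the committed string. To open, Alice reveals $(c, X^n)$ and Bob accepts iff $X^n$ is compatible with $Y^n$ (i.e. $\mathsf{HD}(X^n,Y^n) \le \delta n + \epsilon$), the syndrome matches, and the hash is consistent. Soundness is immediate from concentration of the honest error around $\delta$. For concealing, against the worst-case receiver who sets $\theta = 0$ the variable $X^n$ retains smooth min-entropy $H_\infty^\epsilon(X^n \mid Y^n) \gtrsim n\,h(\gamma)$ (here smoothing is essential, since the raw min-entropy of the noise is only $n\log\frac{1}{1-\gamma}$); subtracting the revealed information via the smooth-entropy chain rule leaves at least $n\,h(\gamma) - n\,h(\theta) - 2\log(1/\epsilon)$ bits, so by the Leftover-hash lemma $G(X^n)$ is $\epsilon$-close to uniform and hides $c$, yielding rate $h(\gamma) - h(\theta)$. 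For binding, against the worst-case committer who maximizes $\theta$, the fair noise of $W_f$ is genuinely unknown to her, so the strings she can later produce that are simultaneously syndrome-consistent and compatible with $Y^n$ form a small set; replacing the (exponentially large) fully independent hash family by a $t$-wise independent one and applying Rompel's concentration bound (\lemref{lem:rom}) shows that a successful double-opening occurs with probability negligible in $n$, while keeping the description of $M$ polynomial.

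\textbf{Converse.} Take $C$ uniform, so $nR_C = H(C)$. Concealing applied to the receiver who sets $\theta=0$ gives $H(C \mid Y_0^n, M) \ge H(C) - n\epsilon$, where $Y_0^n$ is Bob's reception through $\mathrm{BSC}(\gamma)$ alone. Binding applied to the committer who sets $\theta$ to its maximum shows that, up to a negligible probability, the value she can successfully open is a function of the commit-phase transcript; a Fano-type argument then bounds $H(C)$ by the uncertainty resolvable after the commit phase, namely $H(X^n \mid Y_0^n, M)$ reduced by the equivocation room $\approx n\,h(\theta)$ that the controllable noise affords a cheating committer. Using $H(X^n \mid Y_0^n) \le n\,h(\gamma)$ together with the smooth-entropy chain rules to absorb the revealed $M$ and the $\epsilon$-slack, these combine to $R_C \le h(\gamma) - h(\theta)$.

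\textbf{Main obstacle.} The crux is the binding analysis, in both directions, against an adversary who controls the channel noise adaptively. On the achievability side the difficulty is to reveal enough information ($\ge n\,h(\theta)$ bits) to defeat a committer exploiting her controllable noise, yet little enough to leave $\ge n(h(\gamma)-h(\theta))$ bits hidden from a curious receiver; reconciling these two mutually antagonistic attack scenarios within a single protocol is where \lemref{lem:rom} does the real work, letting a succinct $t$-wise independent family suffice for binding without eroding concealing. On the converse side the analogous subtlety is that the concealing and binding bounds are evaluated against different channel behaviours ($\theta=0$ versus $\theta$ maximal), so care is needed to chain the entropy inequalities consistently and obtain a matching single-letter bound.
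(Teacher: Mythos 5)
Your overall strategy matches the paper's: the $W_f\circ W_v$ decomposition, concealing analyzed against a receiver with $\theta=0$ via smooth min-entropy and the leftover-hash lemma, binding analyzed against a committer maximizing $\theta$, and a direct/converse split landing on $h(\gamma)-h(\theta)$. However, there are two genuine gaps.

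First, the binding argument in your direct part would fail as described. A single syndrome/hash of length $\approx n\,h(\theta)$, announced in one shot, cannot bind an unbounded Alice: the set of strings she could later open (those compatible with $y^n$ and within relative distance $\theta$ of her true input) has size about $2^{n h(\theta)}$, so the expected number of \emph{pairs} of viable strings colliding under a hash with range of size $2^{n(h(\theta)+\alpha_1)}$ is roughly $2^{n h(\theta)}$ (up to the small slack parameters), i.e.\ exponentially large; an unbounded Alice simply locates such a pair and announces their common hash value. Rompel's bound (\lemref{lem:rom}) does not rescue a one-round scheme: it only shows that no hash value has more than $O(n)$ viable preimages, which still leaves Alice $O(n)$ strings to equivocate among. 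The paper therefore uses \emph{two} rounds of challenge--response with the hash functions chosen by Bob \emph{after} the channel transmission: the first (a $4n$-universal family with range $2^{n(h(\theta)+\alpha_1)}$, analyzed via \lemref{lem:rom} plus a union bound over all possible responses $e_1$) shrinks the viable set to at most $8n+1$ elements, and only then does a second, short $2$-universal hash of $n\alpha_2$ bits eliminate the residual ambiguity at cost $(8n+1)^2 2^{-n\alpha_2}$. Collapsing this to one round --- or letting Alice pick the function, as ``syndrome of a code'' suggests --- breaks binding while the total leakage stays at $n(h(\theta)+o(1))$ only because of this two-stage structure.

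Second, your converse names the right ingredients but omits the construction that actually yields the single-letter bound. The paper introduces an auxiliary $Z^n$ obtained by passing $X^n$ through a BSC with crossover $\theta$ --- the most-equivocated string a cheating Alice could still successfully open --- and proves $H(C\mid Z^n M)\le 1+kp$ by combining soundness and binding with the standing assumption $I(M;Y^n\mid X^n)=0$, which is needed to replace the real channel output by one simulated from $Z^n$. The resulting bound $k \le I(C;Z^n\mid M)-I(C;Y^n\mid M)+1+kp+\epsilon$ is then single-letterized by the Csisz\'ar--K\"orner identity and bounded by $h(\gamma)-h(\theta)$ using the fact that the BSC($\theta$) is less noisy than the BSC($\gamma$). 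Your proposed chain --- Fano bounding $H(C)$ by $H(X^n\mid Y_0^n,M)$ minus an ``equivocation room'' of $n\,h(\theta)$ --- does not follow as stated: without the auxiliary $Z^n$ and the mutual-information difference there is no route to a matching single-letter expression, and you never invoke the restriction $I(M;Y^n\mid X^n)=0$ on which the paper's converse explicitly depends.
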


\section{Protocol - Direct Part}\label{sec:dir}

We first prove the direct part of the theorem,
i.e., we describe the protocol that achieves the commitment
capacity and prove its security. The protocol follows the approach of Damg\r{a}rd et 
al.~\cite{EC:DamKilSal99} and uses two rounds of hash challenge-response in order 
to guarantee the binding property: the intuition is that the first round reduces the 
number of inputs that Alice can successfully open to be polynomial in the security parameter.
The second round then binds Alice to one specific input.
The concealing condition is achieved using a 2-universal 
hash function $\Ext$ chosen by Alice that is used to generate 
a secure key which is then applied as a one-time pad to cipher $c$.

Let $\theta=\frac{\delta - \gamma}{1 - 2\gamma}$ and let
$\nu > 0$ be a parameter of the protocol. Let $\alpha_1$, $\alpha_2$, 
$\beta$ be parameters such that $\alpha_1,\alpha_2>0$, $\beta>\alpha_1+\alpha_2$, and 
$n(h(\theta)+\alpha_1)$,  $n\alpha_2$ and $n(h(\gamma)-h(\theta)-\beta)$ are integers.
In the following commitment protocol the message space 
is $\calC = \bits^{n(h(\gamma)-h(\theta)-\beta)}$.

Commitment Phase: Alice wants to commit to the binary string $c \in \calC$. The parties proceed as follows:
\begin{enumerate}
 	\item Alice chooses a random binary string $x^n = (x_1,\ldots, x_n)$ of dimension $n$ and for $1 \leq i \leq n$ sends the bit $x_i$ to Bob over the $(\gamma, \delta)$-UNC.
	\item Bob receives the string $y^n = (y_1,\ldots ,y_n)$ sent over the $(\gamma, \delta)$-UNC, chooses uniformly at random a function $g_1$ of the class of $4n$-universal hash functions $\calG_1:\bits^n \ra \bits^{n(h(\theta)+\alpha_1)}$, and sends the description of $g_1$ to Alice over the noiseless channel.
	\item Alice computes $e_1=g_1(x^n)$ and sends it to Bob.
	\item Bob chooses uniformly at random a function $g_2$ of the class of $2$-universal hash functions $\calG_2:\bits^n \ra \bits^{n\alpha_2}$, and sends its description to Alice over the noiseless channel.		
	\item Alice chooses uniformly at random a function $\Ext:\bits^n \ra \bits^{n(h(\gamma)-h(\theta)-\beta)}$ of the class of two-universal hash functions, computes $d = c \oplus \Ext(x^n)$ and $e_2=g_2(x^n)$, and sends $d$, $e_2$ and the description of $\Ext$ to Bob over the noiseless channel.
\end{enumerate}
		
Opening Phase: Alice wants to reveal the value of $c$ to Bob. The parties proceeds as follow:
\begin{enumerate}
	\item Alice sends to Bob over the noiseless channel the values $x^n$ and $c$.
	\item Bob checks if $\gamma n - \nu n \leq \mathsf{HD}(x^n, y^n) \leq \delta n + \nu n$, if $g_1(x^n)=e_1$, $g_2(x^n)=e_2$ and if $c =\Ext(x^n)\oplus d$. Bob accepts if there are no problems in the tests.
\end{enumerate}

\begin{theorem}
The protocol is secure and can achieve the commitment rate 
$h(\gamma)-h(\theta)- \beta$ for any $h(\gamma)-h(\theta) > \beta > 0$ and $n$ 
sufficiently large.
\end{theorem}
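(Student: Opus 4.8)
The plan is to establish the three security properties—soundness, concealing, and binding—separately and then read off the rate; throughout I fix $\alpha_1,\alpha_2$ to be small positive constants with $\alpha_1+\alpha_2<\beta$ and treat $\nu$ as a small constant absorbed into these margins. \emph{Soundness} is routine: for honest parties the channel flips each of the $n$ transmitted bits independently with some $t\in[\gamma,\delta]$, so $\mathsf{HD}(x^n,y^n)$ is a sum of i.i.d.\ Bernoulli$(t)$ variables with mean $tn\in[\gamma n,\delta n]$, and a Chernoff bound places it in $[\gamma n-\nu n,\delta n+\nu n]$ except with probability exponentially small in $n$. The three equality checks pass deterministically because honest Alice reveals the true $x^n$, so Bob accepts with probability $1-\epsilon$ for negligible $\epsilon$.

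\emph{Concealing.} I grant an arbitrary receiver his most favourable channel, $t=\gamma$, so that $X^n=Y^n\oplus N$ with $N$ of typical weight $\gamma n$; by typicality the smoothed conditional min-entropy satisfies $H_\infty^{\epsilon}(X^n\mid Y^n)\ge n h(\gamma)-o(n)$. Bob additionally learns $E_1=g_1(X^n)$ and $E_2=g_2(X^n)$, of total length $n(h(\theta)+\alpha_1)+n\alpha_2$; feeding this into the smooth-entropy chain rule (bounding the loss by $H_0(E_1E_2)$, which is at most the bit length) leaves $H_\infty^{\epsilon'}(X^n\mid Y^n,E_1,E_2,G_1,G_2)\ge n(h(\gamma)-h(\theta)-\alpha_1-\alpha_2)-o(n)$. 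Because $\beta>\alpha_1+\alpha_2$, for large $n$ this exceeds $\log|\calC|+2\log(1/\epsilon)$, so the Leftover-Hash Lemma makes $\Ext(X^n)$ $\epsilon$-close to uniform and independent of Bob's view \emph{even given the public seed} $\Ext$. Since $d=c\oplus\Ext(X^n)$ is then a one-time pad, $I(C;View_B)\le\epsilon$.

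\emph{Binding.} This is the crux, and the place where the two hash rounds and Rompel's moment bound are essential. The key observation is that a cheating Alice never sees $y^n$, so every string she later opens must be compatible with the \emph{random} $y^n$. Writing an alternative opening as $\hat x^n=x^n\oplus e$ and $y^n=x^n\oplus N$, compatibility demands $w(e\oplus N)\le\delta n+\nu n$; since the uncontrollable sub-channel $W_f$ forces at least $\gamma n$ random flips, $w(e\oplus N)$ concentrates near $tn+w(e)(1-2t)$, loosest at $t=\gamma$, so $w(e)\lesssim\theta n$. Hence the only strings Alice can hope to open lie in a Hamming ball of radius $\approx\theta n$ about $x^n$, a set $S'$ with $|S'|\le 2^{n(h(\theta)+o(1))}$. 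For the first round, fix a target $e_1$ and consider the bucket count $\sum_{\hat x\in S'}\mathbf{1}[g_1(\hat x)=e_1]$; the $4n$-universality of $\calG_1$ makes these indicators $4n$-wise independent with mean $|S'|\,2^{-n(h(\theta)+\alpha_1)}\le 2^{-n\alpha_1+o(n)}$, so Rompel's Lemma (\lemref{lem:rom}) with $t=4n$ moments bounds the probability that any fixed bucket exceeds a fixed polynomial $A$ by a quantity super-exponentially small in $n$. A union bound over the $2^{O(n)}$ buckets then shows that, except with negligible probability over $g_1$, \emph{every} bucket—in particular the one Alice adaptively selects—contains at most $A=\mathrm{poly}(n)$ elements of $S'$. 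For the second round the $\le A$ surviving candidates are fixed before the independent $2$-universal $g_2$ is drawn, so the chance that some pair collides under $g_2$ is at most $\binom{A}{2}2^{-n\alpha_2}$, which is negligible; thus with overwhelming probability at most one string in $S'$ matches both $e_1$ and $e_2$, pinning Alice to a single $x^n$ and hence a single $c=\Ext(x^n)\oplus d$.

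\emph{Rate and main obstacle.} As $\log|\calC|=n(h(\gamma)-h(\theta)-\beta)$, the achieved rate is exactly $h(\gamma)-h(\theta)-\beta$, which approaches the claimed capacity $h(\gamma)-h(\theta)$ as $\beta$ (and $\alpha_1,\alpha_2,\nu$) shrink; the matching converse is handled separately. I expect the binding analysis to be the hard part. The delicate points are justifying rigorously that the effective equivocation set really has size $2^{n(h(\theta)+o(1))}$—carefully handling Alice's adaptive choice of channel noise and the fact that ``compatibility with $y^n$'' is itself a random event that must hold with non-negligible probability—and controlling the worst-case bucket over Alice's \emph{adaptive} choice of $e_1$, which is precisely why higher ($4n$-wise) independence together with Rompel's moment bound is required rather than a pairwise Chernoff argument.
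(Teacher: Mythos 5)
Your proposal is correct and follows essentially the same route as the paper's proof: Chernoff concentration of the Hamming distance for soundness, a smooth min-entropy chain-rule bound of $n(h(\gamma)-h(\theta)-\alpha_1-\alpha_2)-o(n)$ followed by the Leftover-Hash Lemma for concealment, and for binding the reduction of the viable set to size $2^{n(h(\theta)+o(1))}$ via the $W_f$/$W_v$ decomposition, Rompel's $4n$-wise moment bound with a union bound over buckets for the first hash round, and a pairwise collision bound for the second. The delicate points you flag (adaptive choice of $e_1$, sizing the equivocation set) are exactly the ones the paper's argument handles the same way.
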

\begin{proof}We proof that the protocol is binding, concealing and 
sound, and furthermore achieves the desired commitment rate.

\paragraph{Soundness} The protocol fails for honest Alice and 
Bob only if $\mathsf{HD}(x^n, y^n) > \delta n + \nu n$ or 
$\mathsf{HD}(x^n, y^n) < \gamma n - \nu n$. We have that 
the expectation of $\mathsf{HD}(x^n, y^n)$ is less than 
or equal to $\delta n$, because the $(\gamma, \delta)$-UNC
 has error probability less than or equal to $\delta$. So the 
 Chernoff bound guarantees that the probability that  
 $\mathsf{HD}(x^n, y^n) > \delta n + \nu n$ is a negligible function of $n$. 
 Similarly we can prove that the probability that 
 $\mathsf{HD}(x^n, y^n) < \gamma n - \nu n$ is a negligible function of $n$.

\paragraph{Concealment} For any $\eta >0$ and $n$ sufficiently large, we have 
that
\begin{eqnarray*}
&& H_\infty^{\eps}(X^n|G_1(X^n),G_2(X^n),Y^n,G_1,G_2) \geq \\
&& \quad \geq H_\infty(X^n,G_1(X^n),G_2(X^n)|Y^n,G_1,G_2)\\
&& \quad \quad - H_0(G_1(X^n),G_2(X^n)|Y^n,G_1,G_2) - \log(\eps^{-1})\\
&& \quad =  H_\infty(X^n|Y^n,G_1,G_2) \\
&& \quad \quad - H_0(G_1(X^n),G_2(X^n)|Y^n,G_1,G_2) - \log(\eps^{-1})\\
&& \quad =  H_\infty(X^n|Y^n)\\
&& \quad \quad - H_0(G_1(X^n),G_2(X^n)|Y^n,G_1,G_2) - \log(\eps^{-1})\\
&& \quad \geq  n(h(\gamma)-\eta)-n(h(\theta)+\alpha_1+\alpha_2) - \log(\eps^{-1})\\
&& \quad =  n(h(\gamma)-h(\theta)-\eta-\alpha_1-\alpha_2) - \log(\eps^{-1})
\end{eqnarray*}
where the first inequality follows from the chain rule for smooth entropy, the first
equality from the fact that $G_1(X^n),G_2(X^n)$ are functions of $G_1$, $G_2$ and $X^n$, the second equality
from the fact that $X^n$ is independent of $G_1$, $G_2$ given $Y^n$ and the last inequality follows
from the facts that the error probability of the channel is at least $\gamma$, 
the range of $G_1$ has cardinality $2^{n(h(\theta)+\alpha_1)}$ and the range of $G_2$ has cardinality $2^{n\alpha_2}$.

Setting $\eps = 2^{-\kappa n}$ (with $\kappa >0$), 
for $n$ sufficiently large, 
the security of the key obtained by applying the hash 
function $\Ext:\bits^n \ra \bits^{n(h(\gamma)-h(\theta)-\beta)}$ to $x$
follows from the leftover-hash lemma since $\beta>\alpha_1+\alpha_2$
and $\kappa$ and $\eta$ can be arbitrarily small for $n$ sufficiently large.
Note that having a negligible statistical distance is equivalent to having a negligible mutual information \cite{IEEEIT:DamPedPfi98}.

\paragraph{Binding} 
A dishonest Alice can cheat successfully 
only if she finds two different strings $\overline{x}^n$ and $\widetilde{x}^n$ 
such that $\gamma n - \nu n \leq \mathsf{HD}(\overline{x}^n, y^n) \leq \delta n + \nu n$, 
$\gamma n - \nu n \leq \mathsf{HD}(\widetilde{x}^n, y^n) \leq \delta n + \nu n$, and both pass the sequentially performed 
hash challenge-response tests,
for arbitrarily small $\nu$ and sufficiently large $n$. We can assume without loss of 
generality that Alice sets the error probability of the channel to 
$\gamma$ when she sends $x^n$. In the typicality test an honest Bob 
accepts any string that is jointly typical with $y^n$ for any 
error probability $\gamma \leq \rho \leq \delta$. So Alice can cheat only if she finds two strings $\overline{x}^n$ and $\widetilde{x}^n$ 
so that both pass the hash tests and are jointly typical with $x^n$ for binary symmetric channels with error probabilities 
$0 \leq \overline{\tau} \leq \theta$ and $0 \leq \widetilde{\tau} \leq \theta$. The number of such jointly typical 
strings is upper bounded by $2^{n(h(\theta)+\eps')}$
for any $\eps'> 0$ and $n$ sufficiently large. We fix $\alpha_1 > \eps'$.

Let the viable set denote the channel inputs that Alice can possibly open to Bob and he would accept. If there were no hash checks, the viable set would have at most $2^{n(h(\theta)+\eps')}$ elements. Lets consider this initial viable set. The goal of the first round of hash challenge-response is to, with overwhelming probability, reduce the number of elements of the viable set to at most $8n+1$. In this first round, Alice has to commit to one arbitrary value $e_1$ for the output of the hash function $g_1$. Considering the $j$-th viable string before this first round, we define $I_j$ as $1$ if that string is mapped to $e_1$ by $g_1$; and $I_j=0$ otherwise. Let $I =\sum_j I_j$. Clearly $\mu=E[I]<1$ as $\alpha_1 > \eps'$. Let $e_1$ be considered bad if $I$ is bigger than $8n+1$. Given that $g_1$ is $4n$-wise independent, by applying \lemref{lem:rom} with $t=4n$ and $A=2t=8n$, we get
\begin{eqnarray*}
\Prob{I>8n+1}&<&O\left(\left(\frac{t\mu + t^2}{(2t)^2}\right)^{t/2}\right)\\
&<&O\left(\left(\frac{1 + t}{4t}\right)^{t/2}\right)\\
&<&O\left(2^{-t/2}\right).
\end{eqnarray*}

Then the probability that any $e_1$ is bad is upper bounded by 
$$O\left(2^{n(h(\theta)+\alpha_1)}2^{-t/2}\right) < O\left(2^{-n}\right).$$

But if the viable set contains at most $8n+1$ elements after the first hash challenge-response round, the probability that some of those collide 
in the second hash challenge-response round is upper bounded by
$$\left(8n+1\right)^22^{-\alpha_2 n},$$
which is negligible in $n$.

\paragraph{Commitment Rate} For $n$ sufficiently large, $\alpha_1$ and $\alpha_2$ can 
be made arbitrarily small, and thus $\beta$ can also be made arbitrarily small 
while preserving the security of the protocol. Therefore it is possible to achieve 
the commitment rate $h(\gamma)-h(\theta)- \beta$ for any $h(\gamma)-h(\theta) > \beta > 0$.
\end{proof}

\section{Converse}\label{sec:con}

%For proving the converse, we will assume a specific cheating behavior for Alice. As we are interested in proving an upper bound in the commitment capacity, restricting Alice's behavior will only strength our result. Let $k=\log |\mathcal{C}|$ and $C$ be uniformly random over $\mathcal{C}$. We assume Alice computes a random binary string $Z^n$ and then computes a random variable $X^n$ by passing $Z^n$ through a binary symmetric channel with error probability equal to $\theta$ with $0 \leq \theta \leq \frac{\delta-\gamma}{1-2\gamma}$. Alice sets the noise level of the unfair noisy channel connecting her to Bob to $\gamma$. Alice then uses $X^n$ as her input to the unfair noisy channel  in a commitment protocol. Denote by $Y^n$ the random variable representing the string received by Bob when $X^n$ is input in the unfair noisy channel. Denote the conversation over the public authenticated and noiseless channel by $M$. 
%We first give a procedure so that the commitment $C$ can be estimated with high probability from $Z^n,Y^n$ and $M$. 

For proving the converse, we will assume a specific cheating behavior for Alice. As we are interested in proving an upper bound in the commitment capacity, restricting Alice's behavior will only strength our result. Let $k=\log |\mathcal{C}|$ and $C$ be uniformly random over $\mathcal{C}$. Let $X^n$ be a random variable representing the data Alice inputs into the unfair noisy channel.  Assume, Alice sets the noise level of the unfair noisy channel connecting her to Bob to $\gamma$. Let $Y^n$ be a random variable obtained by passing $X^n$ through the unfair noisy channel (Channel 1). Let $Z^n$ be a random variable obtained by passing $X^n$ through a binary symmetric channel with error probability equal to $\theta$ with $0 \leq \theta \leq \frac{\delta-\gamma}{1-2\gamma}$ (Channel 2). Denote the conversation over the public authenticated and noiseless channel by $M$. 

In the case of commitments based on fair noisy channels, it was proved  in \cite{WinNasIma03} that after the commit phase is finished, if Bob is presented with Alice's inputs to the channel, $X^n$, he is able to obtain almost complete knowledge on the committed value $C$.  Here we will prove that in the case of unfair noisy channels if Bob is presented with a noisy version of $X^n$ he is still able to compute the committed value $C$ with high probability. 
\begin{lemma}
$H(C|Z^nM) \leq 1 + kp$ for $p$ negligible in $n$.
\end{lemma}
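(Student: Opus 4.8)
The plan is to reduce the statement to a channel‑decoding argument via Fano's inequality. Concretely, I would construct a (possibly randomized) estimator $\hat{C}=g(Z^n,M)$ of the committed value and show that $\Prob{\hat{C}\neq C}\leq p$ for some $p$ negligible in $n$. Fano's inequality then gives $H(C\mid Z^nM)\leq h(p)+p\log(|\calC|-1)\leq 1+pk$, which is exactly the claim (recall $k=\log|\calC|$). So everything rests on exhibiting a decoder from $(Z^n,M)$ that recovers $C$ except with negligible probability.

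The decoder exploits the decomposition in the Remark following the definition of the UNC. Since the channel is non‑trivial, $\delta<2\gamma(1-\gamma)$, which forces $\theta=\frac{\delta-\gamma}{1-2\gamma}<\gamma$; hence there exists $\rho=\frac{\gamma-\theta}{1-2\theta}\in(0,\tfrac12)$ such that a $\mathrm{BSC}(\theta)$ followed by an independent $\mathrm{BSC}(\rho)$ is exactly a $\mathrm{BSC}(\gamma)$. Given $(Z^n,M)$, Bob draws fresh i.i.d.\ $\mathrm{Bernoulli}(\rho)$ noise $\hat{N}$ and forms $\hat{Y}^n=Z^n\oplus\hat{N}$. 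Because $Z^n$ is the output of Channel~2 ($\mathrm{BSC}(\theta)$) on $X^n$ and $\hat{N}$ is independent, $\hat{Y}^n\mid X^n$ is distributed as $\mathrm{BSC}(\gamma)$, i.e.\ exactly as the real output $Y^n$ of Channel~1, and $\hat{Y}^n$ depends on $(X^n,C,M)$ only through $X^n$. Invoking the modeling Remark that $I(M;Y^n\mid X^n)=0$, I would conclude $(\hat{Y}^n,X^n,M,C)\stackrel{d}{=}(Y^n,X^n,M,C)$. Thus Bob manufactures, from the cleaner string $Z^n$, a word $\hat{Y}^n$ statistically indistinguishable from an honest verifier's received word, and then emulates the honest opening test on it.

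The estimator $g$ outputs the value $c$ for which there is a string $\hat{x}^n$ compatible with $\hat{Y}^n$ (i.e.\ $\mathsf{HD}(\hat{x}^n,\hat{Y}^n)\leq\delta n+\epsilon$) such that $\beta(\hat{x}^n,\hat{Y}^n,M,c)=\mathit{accept}$, returning an error symbol if no such $c$ or more than one exists. \emph{Correctness} follows from soundness: taking $\hat{x}^n=X^n$, a Chernoff bound gives $\mathsf{HD}(X^n,\hat{Y}^n)\approx\gamma n\leq\delta n$, so $X^n$ is compatible with $\hat{Y}^n$, and by the distributional identity $\Prob{\beta(X^n,\hat{Y}^n,M,C)=\mathit{accept}}=\Prob{\beta(X^n,Y^n,M,C)=\mathit{accept}}\geq 1-\epsilon$; hence the true $C$ lies in the output set. \emph{Uniqueness} follows from binding: a second openable value $\bar{c}\neq C$ against $\hat{Y}^n$ would mean the transcript admits two valid openings of distinct values, an event whose probability, again by the distributional identity, equals that of the corresponding double‑opening against a genuine $Y^n$, which binding forbids except with negligible probability.

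The delicate point, and the step I expect to be the main obstacle, is this uniqueness argument. Binding is stated per pair $(\widetilde{c},\overline{c})$, so a naive union bound over $\calC$ would cost a factor $|\calC|^2=2^{2k}$ and destroy negligibility. I would avoid this by casting the error event as a single adaptive cheating strategy: an Alice who commits honestly to a uniform $C$, opens the genuine pair $(X^n,C)$, and uses $g$ itself to produce a second opening $(\bar{x}^n,\bar{c})$ whenever the output set has size at least two. The success probability of this one strategy equals the estimator's uniqueness‑failure probability and is bounded by a single negligible $\epsilon$, so no union bound is incurred. Collecting the soundness and binding failure probabilities into $p$, both negligible in $n$, completes the Fano argument. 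The remaining routine items are the two Chernoff concentrations ($\mathsf{HD}(X^n,Z^n)\approx\theta n$ and $\mathsf{HD}(X^n,\hat{Y}^n)\approx\gamma n$) used for compatibility, and the verification that the simulation of $\hat{Y}^n$ preserves the joint law, which rests squarely on the Remark $I(M;Y^n\mid X^n)=0$.
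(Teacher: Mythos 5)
Your proposal follows the same overall strategy as the paper's proof---manufacture a surrogate channel output from $Z^n$, decode $C$ by running the acceptance test $\beta$, invoke soundness for existence and binding for uniqueness, and finish with Fano's inequality---but the execution differs in two ways that are worth recording, both to your advantage. First, the paper passes $Z^n$ through a $\mathrm{BSC}(\gamma)$ to get $\widetilde{Y}^n$, so that $\widetilde{Y}^n$ seen from $X^n$ has crossover $\theta+\gamma-2\theta\gamma$, which equals $\delta$ at the extremal $\theta$; the paper then asserts compatibility of $X^n$ and $\widetilde{Y}^n$, which at that boundary is delicate since the expected Hamming distance sits exactly at $\delta n$. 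You instead choose $\rho=\frac{\gamma-\theta}{1-2\theta}$ (legitimate, since non-triviality gives $\theta<\gamma$) so that $\hat{Y}^n$ is distributed, jointly with $(X^n,M,C)$, exactly as the genuine $Y^n$; this makes the transfer of the soundness and binding guarantees to the simulated transcript an identity of distributions rather than a compatibility argument, and keeps $\mathsf{HD}(X^n,\hat{Y}^n)\approx\gamma n$ comfortably inside the allowed window. Second, the paper's decoder tests $\beta(Z^n,\widetilde{Y}^n,M,c)$, i.e.\ uses $Z^n$ itself as the claimed opening string, which requires reading the soundness condition as ``any string compatible with the received word is accepted''; your decoder searches over all $\hat{x}^n$ compatible with $\hat{Y}^n$ and only needs the standard guarantee that the \emph{true} $X^n$ opens correctly, which is the weaker and safer hypothesis. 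Your explicit treatment of the union-bound danger in the uniqueness step addresses a point the paper silently skips; note, though, that the binding definition as written fixes the pair $(\widetilde{c},\overline{c})$ in advance, so your single-adaptive-strategy reduction still requires reading the definition as bounding the probability that \emph{some} double opening exists---the same implicit strengthening the paper relies on. Finally, your internal randomization of the decoder is exactly the paper's Markov-chain step $CX^n\leftrightarrow Z^n\leftrightarrow\widetilde{Y}^n$ used to drop $\widetilde{Y}^n$ from the conditioning, so nothing is lost there.
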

\begin{proof} 
Let $C, X^n, Y^n, Z^n$ and $M$ be defined as above.  
We first give a procedure so that the commitment $C$ can be estimated with high probability from $Z^n,Y^n$ and $M$.

The procedure is as follows: given $Z^n,Y^n$ and $M$, compute the value $c$ that maximizes $\Prob{\beta(Z^n, Y^n, M, c) = accept}$, breaking ties in an arbitrary way. Because of the bindingness condition, we know that no two different values of $c$ will be accepted with high probability

\begin{equation*}
\Prob{
\begin{array}{l}
\beta(\widetilde{X}^n,Y^n, M, \widetilde{c}) = accept~\&\\ 
\beta(\overline{X}^n,Y^n, M, \overline{c}) = accept
\end{array} 
} \leq \epsilon
\end{equation*}
for all $\widetilde{X}^n$ and $\overline{X}^n$ compatible with $Y^n$.

Moreover, from the correctness property of the protocol and from the fact that $Z^n$ and $Y^n$ are compatible for the unfair noisy channel in question, we know that for the correct value $c$ we have 

\[
\Prob{\beta(Z^n, Y^n, M, c) = accept} \geq 1 - \epsilon.
\]

Thus, with high probability this procedure will give us the right committed value $c$. Let $p$ be the probability that this procedure returns a wrong value.
Using Fano's inequality we get
\begin{eqnarray*}
H(C|Z^nY^nM) &\leq&h(p)+p \log |\mathcal{C}|\\
&\leq& 1 + p \log |\mathcal{C}|\\
&\leq & 1 + kp.
\end{eqnarray*}

One can prove that the output of the channel $Y^n$ is not needed in the above described procedure. The intuition is that $Z^n$ is a ``less noisy'' version of $X^n$ and thus can be used (instead of $Y^n$) for retrieving the value of the commitment. In order to formalize this result, we need to use the assumed independence of the public conversation $M$ and $Y^n$ given $X^n$, i.e., $I(M;Y^n|X^n)=0$. First, we pass (or simulate the passing of) $Z^n$ through a binary symmetric channel with error probability $\gamma$. Denote the output of the simulated channel by $\widetilde{Y}^n$. Note that $\widetilde{Y}^n$ and $X^n$ are compatible. Moreover, given the fact that the public conversation is independent of $Y^n$ given $X^n$, we have that the public communication of the protocol using Channel 1 ($M$), $C,X^n$ and $\widetilde{Y}^n$ are a valid transcript of a commitment protocol (the correctness, binding and concealing properties should apply). Thus, one has from the correctness property that 
\[
\Prob{\beta(Z^n, \widetilde{Y}^n, M, c) = accept} \geq 1 - \epsilon.
\]

From bindingness we have that 

\begin{equation*}
\Prob{
\begin{array}{l}
\beta(\widetilde{X}^n,\widetilde{Y}^n, M, \widetilde{c}) = accept~\&\\ 
\beta(\overline{X}^n,\widetilde{Y}^n, M, \overline{c}) = accept
\end{array} 
} \leq \epsilon
\end{equation*}
for all $\widetilde{X}^n$ and $\overline{X}^n$ compatible with $\widetilde{Y}^n$. 

Again, using  Fano's inequality we get
\begin{eqnarray*}
H(C|Z^n\widetilde{Y}^nM) &\leq&h(p)+p \log |\mathcal{C}|\\
&\leq& 1 + p \log |\mathcal{C}|\\
&\leq & 1 + kp.
\end{eqnarray*}

Because the Markov chain $CX^n \leftrightarrow Z^n\leftrightarrow \widetilde{Y}^n$ holds, we have that  $H(C|Z^n\widetilde{Y}^nM)=H(C|Z^nM)$, which proves our result.

\end{proof}

We have that
\begin{eqnarray}
k & \leq & H(C|Y^nM) + \epsilon \\
\nonumber & = & H(C|Y^nM) - H(C|Z^nM) + H(C|Z^nM) +  \epsilon \\
& \leq & H(C|Y^nM) - H(C|Z^nM) + 1 + kp +  \epsilon \\
\nonumber &=& H(C|Y^nM) - H(C|Z^nM) - H(C|M) + H(C|M)\\
\nonumber && \quad + 1 + kp +  \epsilon \\
\nonumber &=& I(C;Z^n|M) - I(C;Y^n|M)+ 1 + kp +  \epsilon
\end{eqnarray}
where inequality 1 comes from the $\epsilon$-concealing requirement and 
inequality 2 from the previous lemma. 

Now we develop the expression $I(C;Z^n|M) - I(C;Y^n|M)$ using the same steps used 
on the Section V of the seminal work of Csisz\'ar and K\"{o}rner \cite{IEEEIT:CsiKor78}; the 
details are included for the sake of completeness. Let $Z^i$ denote $Z_1 \ldots Z_i$ and $\hat{Y}^i$ denote $Y_i \ldots Y_n$. We expand 
$I(C;Z^n|M)$ starting from $I(C;Z_1|M)$ and $I(C;Y^n|M)$ starting from $I(C;Y_n|M)$
\begin{eqnarray}
\nonumber I(C;Z^n|M)&=&\sum_{i=1}^n I (C;Z_i|MZ^{i-1})\\
\nonumber &=& \sum_{i=1}^n \Big[ H(Z_i|MZ^{i-1})- H(Z_i|MZ^{i-1}C) \\
\nonumber  && \quad  - H(Z_i|MZ^{i-1}C\hat{Y}^{i+1})\\ 
\nonumber  && \quad  + H(Z_i|MZ^{i-1}C\hat{Y}^{i+1})\Big] \\
\nonumber &=& \sum_{i=1}^n \Big[I(C\hat{Y}^{i+1};Z_i|MZ^{i-1})\\
\nonumber  && \quad - I (\hat{Y}^{i+1};Z_i|MZ^{i-1}C) \Big]\\
\nonumber &=& \sum_{i=1}^n \Big[I(C;Z_i|MZ^{i-1}\hat{Y}^{i+1})\\
\nonumber  && \quad  + I(\hat{Y}^{i+1};Z_i|MZ^{i-1})\\
\nonumber  && \quad - I (\hat{Y}^{i+1};Z_i|MZ^{i-1}C) \Big].
\end{eqnarray}
Similarly we obtain
\begin{eqnarray}
\nonumber I(C;Y^n|M)&=& \sum_{i=1}^n \Big[I(C;Y_i|MZ^{i-1}\hat{Y}^{i+1})\\
\nonumber  && \quad  + I(Z^{i-1};Y_i|M\hat{Y}^{i+1})\\
\nonumber  && \quad - I (Z^{i-1};Y_i|M\hat{Y}^{i+1}C) \Big].
\end{eqnarray}
We have that 
\begin{eqnarray}
\nonumber \sum_{i=1}^n I(\hat{Y}^{i+1};Z_i|MZ^{i-1}) &=& \sum_{i=1}^n \sum_{j=i+1}^n I(Y_j;Z_i|MZ^{i-1}\hat{Y}^{j+1})\\
\nonumber &=& \sum_{j=2}^n \sum_{i=1}^{j-1} I(Y_j;Z_i|MZ^{i-1}\hat{Y}^{j+1})\\
\nonumber &=& \sum_{j=1}^n I(Z^{j-1};Y_j|M\hat{Y}^{j+1}).
\end{eqnarray}
Similarly we can get that
\begin{eqnarray}
\nonumber \sum_{i=1}^n I(\hat{Y}^{i+1};Z_i|MZ^{i-1}C)=\sum_{j=1}^n I(Z^{j-1};Y_j|M\hat{Y}^{j+1}C).
\end{eqnarray}
Therefore
\begin{eqnarray}
\nonumber I(C;Z^n|M) - I(C;Y^n|M)&=& \sum_{i=1}^n \Big[I(C;Z_i|MZ^{i-1}\hat{Y}^{i+1}) \\
\nonumber  && \quad - I(C;Y_i|MZ^{i-1}\hat{Y}^{i+1})\Big].
\end{eqnarray}
Letting $L$ be a random variable uniformly distributed in $\{1,\ldots, n\}$ and independent of $CMX^nY^nZ^n$, 
and setting $U\triangleq MZ^{L-1}\hat{Y}^{L+1}L$, $V\triangleq UC$, $X\triangleq X_L$, $Y \triangleq Y_L$ and 
$Z \triangleq Z_L$ we get that $U\leftrightarrow V\leftrightarrow X \leftrightarrow YZ$ form a Markov chain and
\begin{align}
\nonumber \frac{1}{n}  \sum_{i=1}^n \Big[ I(C;Z_i|MZ^{i-1}\hat{Y}^{i+1}) - I(C;Y_i|MZ^{i-1}\hat{Y}^{i+1})\Big]=\\
\nonumber \qquad = I(C;Z|U) - I(C;Y|U)\\
\nonumber \qquad = I(V;Z|U) - I(V;Y|U).
\end{align}
Putting everything together, for any secure commitment protocol, there are $U\leftrightarrow V\leftrightarrow X \leftrightarrow YZ$
such that 
\begin{eqnarray}
\frac{k}{n} \leq  I(V;Z|U) - I(V;Y|U) + \epsilon'
\end{eqnarray}
where $\epsilon' = \frac{1 + kp +  \epsilon}{n}$ goes to 0 for $n$ sufficiently large.

We now set $\theta=\frac{\delta - \gamma}{1 - 2\gamma}$.
In our case channel 2 is less noisy than channel 1, therefore maximizing over all 
$U\leftrightarrow V\leftrightarrow X \leftrightarrow YZ$ we get
\begin{eqnarray}
\nonumber I(V;Z|U) - I(V;Y|U) &=& I(V;Z) - I(V;Y)\\
\nonumber && \quad- [I(U;Z)-I(U;Y)]\\
\nonumber &=& I(X;Z) - I(X;Y)\\
\nonumber && \quad - [I(X;Z|V)-I(X;Y|V)]\\
\nonumber && \quad- [I(U;Z)-I(U;Y)]\\
&\leq& I(X;Z) - I(X;Y)\\
&\leq& h(\gamma)-h(\theta)
\end{eqnarray}
where inequality 4 comes from the fact that both 
expressions in the brackets are non-negative since channel 2 is less noisy than channel 1 
and inequality 5 follows taking the maximum over $X$. Hence 
\begin{eqnarray}
\nonumber \frac{k}{n} \leq h(\gamma)-h(\theta) + \epsilon'
\end{eqnarray}
where $\epsilon'$ goes to 0 for $n$ sufficiently large. This completes the proof
of the converse.

\section{Final Remarks}\label{sec:conc}
In this paper we obtained the commitment capacity of the 
unfair noisy channels. A similar notion of 
capacity for oblivious transfer was defined in~\cite{IEEEIT:NasWin08}. We state 
as an open problem to obtain the oblivious transfer capacity of unfair
noisy channels. Another open question is to derive the commitment 
capacity of weak channels~\cite{TCC:Wullschleger09}. In the case of elastic channels, for commitments from Alice to Bob, the channel is essentially degraded to a binary symmetric channel with crossover probability $\gamma$ and therefore the commitment capacity is $h(\gamma)$. On the other hand, we conjecture that the commitment capacity for commitments from Bob to Alice is $h(\delta) - h(\theta)$ for $\theta=\frac{\delta-\gamma}{1-2\gamma}$. However, if either of the two restrictions of the elastic channels in relation to unfair noisy channels is discarded (i.e., only the receiver being able to set the crossover probability; and the crossover probability being fixed to $\delta$ when both parties are honest), then we get back to the same commitment capacity as for the unfair noisy channels.

Finally, we need in our proof to assume that $I(M;Y^n|X^n)=0$, that is, the public communication and the channel outputs are independent given the channel inputs. While that is a natural condition respected by all the protocols that exist, we believe that, ultimately, a more general proof that does not depend on this restriction is possible. We leave such a proof as a future research problem.

% --- -----------------------------------------------------------------
% --- The Bibliography.
% --- -----------------------------------------------------------------

\bibliographystyle{IEEEtran}
\bibliography{abbrev0,crypto,informationtheory,additional}

% Generated by IEEEtran.bst, version: 1.14 (2015/08/26)
\begin{thebibliography}{10}
\providecommand{\url}[1]{#1}
\csname url@samestyle\endcsname
\providecommand{\newblock}{\relax}
\providecommand{\bibinfo}[2]{#2}
\providecommand{\BIBentrySTDinterwordspacing}{\spaceskip=0pt\relax}
\providecommand{\BIBentryALTinterwordstretchfactor}{4}
\providecommand{\BIBentryALTinterwordspacing}{\spaceskip=\fontdimen2\font plus
\BIBentryALTinterwordstretchfactor\fontdimen3\font minus
  \fontdimen4\font\relax}
\providecommand{\BIBforeignlanguage}[2]{{%
\expandafter\ifx\csname l@#1\endcsname\relax
\typeout{** WARNING: IEEEtran.bst: No hyphenation pattern has been}%
\typeout{** loaded for the language `#1'. Using the pattern for}%
\typeout{** the default language instead.}%
\else
\language=\csname l@#1\endcsname
\fi
#2}}
\providecommand{\BIBdecl}{\relax}
\BIBdecl

\bibitem{Blum1983}
\BIBentryALTinterwordspacing
M.~Blum, ``Coin flipping by telephone a protocol for solving impossible
  problems,'' \emph{SIGACT News}, vol.~15, no.~1, pp. 23--27, Jan. 1983.
  [Online]. Available: \url{http://doi.acm.org/10.1145/1008908.1008911}
\BIBentrySTDinterwordspacing

\bibitem{STOC:GolMicWig87}
O.~Goldreich, S.~Micali, and A.~Wigderson, ``How to play any mental game or {A}
  completeness theorem for protocols with honest majority,'' in \emph{19th
  Annual {ACM} Symposium on Theory of Computing}, A.~Aho, Ed.\hskip 1em plus
  0.5em minus 0.4em\relax New York City, NY, USA: {ACM} Press, May~25--27,
  1987, pp. 218--229.

\bibitem{C:ChaDamVan87}
D.~Chaum, I.~Damg{\aa}rd, and J.~{van de Graaf}, ``Multiparty computations
  ensuring privacy of each party's input and correctness of the result,'' in
  \emph{Advances in Cryptology -- {CRYPTO}'87}, ser. Lecture Notes in Computer
  Science, C.~Pomerance, Ed., vol. 293.\hskip 1em plus 0.5em minus 0.4em\relax
  Santa Barbara, CA, USA: Springer, Heidelberg, Germany, Aug.~16--20, 1988, pp.
  87--119.

\bibitem{STOC:ChaCreDam88}
D.~Chaum, C.~Cr{\'e}peau, and I.~Damg{\r a}rd, ``Multiparty unconditionally
  secure protocols (extended abstract),'' in \emph{20th Annual {ACM} Symposium
  on Theory of Computing}.\hskip 1em plus 0.5em minus 0.4em\relax Chicago, IL,
  USA: {ACM} Press, May~2--4, 1988, pp. 11--19.

\bibitem{EveGolLem85}
\BIBentryALTinterwordspacing
S.~Even, O.~Goldreich, and A.~Lempel, ``A randomized protocol for signing
  contracts,'' \emph{Commun. ACM}, vol.~28, no.~6, pp. 637--647, Jun. 1985.
  [Online]. Available: \url{http://doi.acm.org/10.1145/3812.3818}
\BIBentrySTDinterwordspacing

\bibitem{GolMicWig91}
O.~Goldreich, S.~Micali, and A.~Wigderson, ``Proofs that yield nothing but
  their validity or all languages in {NP} have zero-knowledge proof systems,''
  \emph{Journal of the {ACM}}, vol.~38, no.~3, pp. 691--729, 1991.

\bibitem{Goldreich01}
O.~Goldreich, \emph{Foundations of Cryptography: Basic Tools}.\hskip 1em plus
  0.5em minus 0.4em\relax Cambridge, UK: Cambridge University Press, 2001,
  vol.~1.

\bibitem{BraChaCre88}
\BIBentryALTinterwordspacing
G.~Brassard, D.~Chaum, and C.~Cr{\'e}peau, ``Minimum disclosure proofs of
  knowledge,'' \emph{J. Comput. Syst. Sci.}, vol.~37, no.~2, pp. 156--189, Oct.
  1988. [Online]. Available:
  \url{http://dx.doi.org/10.1016/0022-0000(88)90005-0}
\BIBentrySTDinterwordspacing

\bibitem{Wyner75}
A.~Wyner, ``The wire-tap channel,'' \emph{The Bell Systems Technical Journal},
  vol.~54, no.~8, pp. 1355--1387, Oct. 1975.

\bibitem{IEEEIT:CsiKor78}
I.~Csisz\'ar and J.~K\"{o}rner, ``Broadcast channels with confidential
  messages,'' \emph{Information Theory, IEEE Transactions on}, vol.~24, no.~3,
  pp. 339--348, May 1978.

\bibitem{IEEEIT:Maurer93}
U.~Maurer, ``Secret key agreement by public discussion from common
  information,'' \emph{Information Theory, IEEE Transactions on}, vol.~39,
  no.~3, pp. 733--742, May 1993.

\bibitem{IEEEIT-AhlCsi93}
R.~Ahlswede and I.~Csisz\'ar, ``Common randomness in information theory and
  cryptography. i. secret sharing,'' \emph{Information Theory, IEEE
  Transactions on}, vol.~39, no.~4, pp. 1121--1132, Jul 1993.

\bibitem{FOCS:CreKil88}
C.~Cr{\'e}peau and J.~Kilian, ``Achieving oblivious transfer using weakened
  security assumptions (extended abstract),'' in \emph{29th Annual Symposium on
  Foundations of Computer Science}.\hskip 1em plus 0.5em minus 0.4em\relax
  White Plains, New York: {IEEE} Computer Society Press, Oct.~24--26, 1988, pp.
  42--52.

\bibitem{EC:Crepeau97}
C.~Cr{\'e}peau, ``Efficient cryptographic protocols based on noisy channels,''
  in \emph{Advances in Cryptology -- {EUROCRYPT}'97}, ser. Lecture Notes in
  Computer Science, W.~Fumy, Ed., vol. 1233.\hskip 1em plus 0.5em minus
  0.4em\relax Konstanz, Germany: Springer, Heidelberg, Germany, May~11--15,
  1997, pp. 306--317.

\bibitem{EC:DamKilSal99}
I.~Damg{\aa}rd, J.~Kilian, and L.~Salvail, ``On the (im)possibility of basing
  oblivious transfer and bit commitment on weakened security assumptions,'' in
  \emph{Advances in Cryptology -- {EUROCRYPT}'99}, ser. Lecture Notes in
  Computer Science, J.~Stern, Ed., vol. 1592.\hskip 1em plus 0.5em minus
  0.4em\relax Prague, Czech Republic: Springer, Heidelberg, Germany, May~2--6,
  1999, pp. 56--73.

\bibitem{EC:KhuMajSah16}
D.~Khurana, H.~K. Maji, and A.~Sahai, ``Secure computation from elastic noisy
  channels,'' in \emph{Advances in Cryptology -- {EUROCRYPT}~2016, Part II},
  ser. Lecture Notes in Computer Science, M.~Fischlin and J.-S. Coron, Eds.,
  vol. 9666.\hskip 1em plus 0.5em minus 0.4em\relax Vienna, Austria: Springer,
  Heidelberg, Germany, May~8--12, 2016, pp. 184--212.

\bibitem{cryptoeprint:2016:120}
I.~Cascudo, I.~Damg{\r a}rd, F.~Lacerda, and S.~Ranellucci, ``Oblivious
  transfer from any non-trivial elastic noisy channels via secret key
  agreement,'' Cryptology ePrint Archive, Report 2016/120. To appear on TCC
  2016b, 2016, \url{http://eprint.iacr.org/2016/120}.

\bibitem{WinNasIma03}
\BIBentryALTinterwordspacing
A.~Winter, A.~Nascimento, and H.~Imai,
  ``\BIBforeignlanguage{English}{Commitment capacity of discrete memoryless
  channels},'' in \emph{\BIBforeignlanguage{English}{Cryptography and Coding}},
  ser. Lecture Notes in Computer Science, K.~Paterson, Ed.\hskip 1em plus 0.5em
  minus 0.4em\relax Springer Berlin Heidelberg, 2003, vol. 2898, pp. 35--51.
  [Online]. Available: \url{http://dx.doi.org/10.1007/978-3-540-40974-8_4}
\BIBentrySTDinterwordspacing

\bibitem{IEEEIT:NBSI08}
A.~Nascimento, J.~Barros, S.~Skludarek, and H.~Imai, ``The commitment capacity
  of the gaussian channel is infinite,'' \emph{Information Theory, IEEE
  Transactions on}, vol.~54, no.~6, pp. 2785--2789, June 2008.

\bibitem{IEEEIT:LeuHel78}
S.~Leung-Yan-Cheong and M.~Hellman, ``The gaussian wire-tap channel,''
  \emph{Information Theory, IEEE Transactions on}, vol.~24, no.~4, pp.
  451--456, Jul 1978.

\bibitem{EC:OzaWyn84}
L.~H. Ozarow and A.~D. Wyner, ``Wire-tap channel {II},'' in \emph{Advances in
  Cryptology -- {EUROCRYPT}'84}, ser. Lecture Notes in Computer Science,
  T.~Beth, N.~Cot, and I.~Ingemarsson, Eds., vol. 209.\hskip 1em plus 0.5em
  minus 0.4em\relax Paris, France: Springer, Heidelberg, Germany, Apr.~9--11,
  1985, pp. 33--50.

\bibitem{IEEEIT:CsiNar04}
I.~Csisz\'ar and P.~Narayan, ``Secrecy capacities for multiple terminals,''
  \emph{Information Theory, IEEE Transactions on}, vol.~50, no.~12, pp.
  3047--3061, Dec 2004.

\bibitem{ISIT:ParBla05}
P.~Parada and R.~Blahut, ``Secrecy capacity of simo and slow fading channels,''
  in \emph{Information Theory, 2005. ISIT 2005. Proceedings. International
  Symposium on}, Sept 2005, pp. 2152--2155.

\bibitem{ISIT:BarRod06}
J.~Barros and M.~Rodrigues, ``Secrecy capacity of wireless channels,'' in
  \emph{Information Theory, 2006 IEEE International Symposium on}, July 2006,
  pp. 356--360.

\bibitem{LiYatTra10}
\BIBentryALTinterwordspacing
Z.~Li, R.~Yates, and W.~Trappe, ``\BIBforeignlanguage{English}{Secrecy capacity
  of independent parallel channels},'' in
  \emph{\BIBforeignlanguage{English}{Securing Wireless Communications at the
  Physical Layer}}, R.~Liu and W.~Trappe, Eds.\hskip 1em plus 0.5em minus
  0.4em\relax Springer US, 2010, pp. 1--18. [Online]. Available:
  \url{http://dx.doi.org/10.1007/978-1-4419-1385-2_1}
\BIBentrySTDinterwordspacing

\bibitem{ISIT:LiaPooSha07}
Y.~Liang, H.~Poor, and S.~Shamai, ``Secrecy capacity region of fading broadcast
  channels,'' in \emph{Information Theory, 2007. ISIT 2007. IEEE International
  Symposium on}, June 2007, pp. 1291--1295.

\bibitem{IEEEIT:GopLaiElg08}
P.~K. Gopala, L.~Lai, and H.~El~Gamal, ``On the secrecy capacity of fading
  channels,'' \emph{Information Theory, IEEE Transactions on}, vol.~54, no.~10,
  pp. 4687--4698, Oct 2008.

\bibitem{LaiElgPoo07}
L.~Lai, H.~{El Gamal}, and H.~V. Poor, ``Secrecy capacity of the wiretap
  channel with noisy feedback,'' in \emph{45th Annu. Allerton Conf.
  Communications, Control and Computing}, 2007.

\bibitem{IEEEIT:CsiNar08}
I.~Csisz\'ar and P.~Narayan, ``Secrecy capacities for multiterminal channel
  models,'' \emph{Information Theory, IEEE Transactions on}, vol.~54, no.~6,
  pp. 2437--2452, June 2008.

\bibitem{IEEEIT:AFJK09}
E.~Ardestanizadeh, M.~Franceschetti, T.~Javidi, and Y.-H. Kim, ``Wiretap
  channel with secure rate-limited feedback,'' \emph{Information Theory, IEEE
  Transactions on}, vol.~55, no.~12, pp. 5353--5361, Dec 2009.

\bibitem{BagMotKha09}
G.~Bagherikaram, A.~Motahari, and A.~Khandani, ``Secrecy capacity region of
  gaussian broadcast channel,'' in \emph{Information Sciences and Systems,
  2009. CISS 2009. 43rd Annual Conference on}, March 2009, pp. 152--157.

\bibitem{IEEEIT:EkrUlu11}
E.~Ekrem and S.~Ulukus, ``The secrecy capacity region of the gaussian mimo
  multi-receiver wiretap channel,'' \emph{Information Theory, IEEE Transactions
  on}, vol.~57, no.~4, pp. 2083--2114, April 2011.

\bibitem{IEEEIT:OggHas11}
F.~Oggier and B.~Hassibi, ``The secrecy capacity of the mimo wiretap channel,''
  \emph{Information Theory, IEEE Transactions on}, vol.~57, no.~8, pp.
  4961--4972, Aug 2011.

\bibitem{IEEEIT:NasWin08}
A.~Nascimento and A.~Winter, ``On the oblivious-transfer capacity of noisy
  resources,'' \emph{Information Theory, IEEE Transactions on}, vol.~54, no.~6,
  pp. 2572--2581, June 2008.

\bibitem{ISIT:ImaMorNas06}
H.~Imai, K.~Morozov, and A.~Nascimento, ``On the oblivious transfer capacity of
  the erasure channel,'' in \emph{Information Theory, 2006 IEEE International
  Symposium on}, July 2006, pp. 1428--1431.

\bibitem{ISIT:AhlCsi07}
R.~Ahlswede and I.~Csisz\'ar, ``On oblivious transfer capacity,'' in
  \emph{Information Theory, 2007. ISIT 2007. IEEE International Symposium on},
  June 2007, pp. 2061--2064.

\bibitem{IEEEIT:PDMN11}
A.~Pinto, R.~Dowsley, K.~Morozov, and A.~Nascimento, ``Achieving oblivious
  transfer capacity of generalized erasure channels in the malicious model,''
  \emph{Information Theory, IEEE Transactions on}, vol.~57, no.~8, pp.
  5566--5571, Aug 2011.

\bibitem{DowNas14}
R.~Dowsley and A.~C.~A. Nascimento, ``On the oblivious transfer capacity of
  generalized erasure channels against malicious adversaries: The case of low
  erasure probability,'' \emph{IEEE Transactions on Information Theory},
  vol.~63, no.~10, pp. 6819--6826, Oct 2017.

\bibitem{AC:RenWol05}
R.~Renner and S.~Wolf, ``Simple and tight bounds for information reconciliation
  and privacy amplification,'' in \emph{Advances in Cryptology --
  {ASIACRYPT}~2005}, ser. Lecture Notes in Computer Science, B.~K. Roy, Ed.,
  vol. 3788.\hskip 1em plus 0.5em minus 0.4em\relax Chennai, India: Springer,
  Heidelberg, Germany, Dec.~4--8, 2005, pp. 199--216.

\bibitem{NisZuc96}
\BIBentryALTinterwordspacing
N.~Nisan and D.~Zuckerman, ``Randomness is linear in space,'' \emph{J. Comput.
  Syst. Sci.}, vol.~52, no.~1, pp. 43--52, Feb. 1996. [Online]. Available:
  \url{http://dx.doi.org/10.1006/jcss.1996.0004}
\BIBentrySTDinterwordspacing

\bibitem{DORS08}
\BIBentryALTinterwordspacing
Y.~Dodis, R.~Ostrovsky, L.~Reyzin, and A.~Smith, ``Fuzzy extractors: How to
  generate strong keys from biometrics and other noisy data,'' \emph{SIAM J.
  Comput.}, vol.~38, no.~1, pp. 97--139, Mar. 2008. [Online]. Available:
  \url{http://dx.doi.org/10.1137/060651380}
\BIBentrySTDinterwordspacing

\bibitem{RadTas00}
\BIBentryALTinterwordspacing
J.~Radhakrishnan and A.~Ta-Shma, ``Bounds for dispersers, extractors, and
  depth-two superconcentrators,'' \emph{SIAM J. Discret. Math.}, vol.~13,
  no.~1, pp. 2--24, Jan. 2000. [Online]. Available:
  \url{http://dx.doi.org/10.1137/S0895480197329508}
\BIBentrySTDinterwordspacing

\bibitem{CarWeg77}
\BIBentryALTinterwordspacing
J.~L. Carter and M.~N. Wegman, ``Universal classes of hash functions (extended
  abstract),'' in \emph{Proceedings of the Ninth Annual ACM Symposium on Theory
  of Computing}, ser. STOC '77.\hskip 1em plus 0.5em minus 0.4em\relax New
  York, NY, USA: ACM, 1977, pp. 106--112. [Online]. Available:
  \url{http://doi.acm.org/10.1145/800105.803400}
\BIBentrySTDinterwordspacing

\bibitem{STOC:ImpLevLub89}
R.~Impagliazzo, L.~A. Levin, and M.~Luby, ``Pseudo-random generation from
  one-way functions (extended abstracts),'' in \emph{21st Annual {ACM}
  Symposium on Theory of Computing}.\hskip 1em plus 0.5em minus 0.4em\relax
  Seattle, WA, USA: {ACM} Press, May~15--17, 1989, pp. 12--24.

\bibitem{HILL99}
J.~H{\aa}stad, R.~Impagliazzo, L.~A. Levin, and M.~Luby, ``A pseudorandom
  generator from any one-way function,'' \emph{{SIAM} Journal on Computing},
  vol.~28, no.~4, pp. 1364--1396, 1999.

\bibitem{BenBraRob88}
\BIBentryALTinterwordspacing
C.~H. Bennett, G.~Brassard, and J.-M. Robert, ``Privacy amplification by public
  discussion,'' \emph{SIAM J. Comput.}, vol.~17, no.~2, pp. 210--229, Apr.
  1988. [Online]. Available: \url{http://dx.doi.org/10.1137/0217014}
\BIBentrySTDinterwordspacing

\bibitem{IEEEIT:BBCM95}
C.~Bennett, G.~Brassard, C.~Cr{\'e}peau, and U.~Maurer, ``Generalized privacy
  amplification,'' \emph{Information Theory, IEEE Transactions on}, vol.~41,
  no.~6, pp. 1915--1923, Nov 1995.

\bibitem{Rompel90}
J.~T. Rompel, ``Techniques for computing with low-independence randomness,''
  Ph.D. dissertation, Massachusetts Institute of Technology, Cambridge, MA,
  USA, 1990.

\bibitem{FOCS:Canetti01}
R.~Canetti, ``Universally composable security: A new paradigm for cryptographic
  protocols,'' in \emph{42nd Annual Symposium on Foundations of Computer
  Science}.\hskip 1em plus 0.5em minus 0.4em\relax Las Vegas, NV, USA: {IEEE}
  Computer Society Press, Oct.~14--17, 2001, pp. 136--145.

\bibitem{JIT:DGMN13}
R.~Dowsley, J.~van~de Graaf, J.~M{\"u}ller-Quade, and A.~C.~A. Nascimento, ``On
  the composability of statistically secure bit commitments,'' \emph{Journal of
  Internet Technology}, vol.~14, no.~3, pp. 509--516, 2013.

\bibitem{SBSEG:DMN08}
R.~Dowsley, J.~M{\"u}ller-Quade, and A.~C.~A. Nascimento, ``On the possibility
  of universally composable commitments based on noisy channels,'' in
  \emph{Anais do VIII Simp\'{o}sio Brasileiro em Seguran\c{c}a da
  Informa\c{c}\~{a}o e de Sistemas Computacionais, SBSEG 2008}, A.~L.~M. dos
  Santos and M.~P. Barcellos, Eds.\hskip 1em plus 0.5em minus 0.4em\relax
  Gramado, Brazil: Sociedade Brasileira de Computa\c{c}\~{a}o (SBC), Sep.~1--5,
  2008, pp. 103--114.

\bibitem{IEEEIT:DamPedPfi98}
I.~Damg{\aa}rd, T.~Pedersen, and B.~Pfitzmann, ``Statistical secrecy and
  multibit commitments,'' \emph{Information Theory, IEEE Transactions on},
  vol.~44, no.~3, pp. 1143--1151, May 1998.

\bibitem{TCC:Wullschleger09}
J.~Wullschleger, ``Oblivious transfer from weak noisy channels,'' in
  \emph{TCC~2009: 6th Theory of Cryptography Conference}, ser. Lecture Notes in
  Computer Science, O.~Reingold, Ed., vol. 5444.\hskip 1em plus 0.5em minus
  0.4em\relax Springer, Heidelberg, Germany, Mar.~15--17, 2009, pp. 332--349.

\end{thebibliography}

\end{document}